\documentclass[a4paper,11pt]{article}

\usepackage[top=1in,bottom=1in,left=1in,right=1in]{geometry}
\usepackage{amsmath,amssymb,amsthm,mathrsfs}
\usepackage{booktabs}
\usepackage{microtype}
\usepackage{xcolor}
\usepackage{cite}
\usepackage{tikz}
\usetikzlibrary{arrows.meta,positioning}
\usepackage{hyperref}
\hypersetup{colorlinks=true,linkcolor=blue,citecolor=blue,urlcolor=blue}
\numberwithin{equation}{section}
\theoremstyle{plain}
\newtheorem{theorem}{Theorem}[section]
\newtheorem{proposition}[theorem]{Proposition}
\newtheorem{lemma}[theorem]{Lemma}
\newtheorem{corollary}[theorem]{Corollary}

\theoremstyle{definition}
\newtheorem{definition}[theorem]{Definition}

\theoremstyle{remark}
\newtheorem{remark}[theorem]{Remark}

\newcommand{\Id}{I}
\newcommand{\Mat}{\operatorname{Mat}}
\newcommand{\GL}{\operatorname{GL}}
\newcommand{\SL}{\operatorname{SL}}
\newcommand{\tr}{\operatorname{tr}}
\newcommand{\diag}{\operatorname{diag}}

\newcommand{\wt}{\operatorname{wt}}

\newcommand{\bbC}{\mathbb C}
\newcommand{\bbZ}{\mathbb Z}
\newcommand{\bt}{\boldsymbol t}
\newcommand{\e}{\mathrm e}

\title{%
  \fontsize{15}{20}\selectfont\bfseries Sato-theoretic construction of the anti-self-dual Yang--Mills 
  \\ hierarchy and the Ward conjecture
}
\author{Shangshuai Li$^{1,2}$\thanks{Email: lishangshuai@nbu.edu.cn}
	~~Da-jun Zhang$^{3,4}$\thanks{Corresponding author. Email: djzhang@staff.shu.edu.cn}
  \\[5pt]
  {\small $^1$ School of Mathematics and Statistics, Ningbo University, Ningbo 315211, China} \\
  {\small $^2$ Graduate School of Mathematics, Nagoya University, Nagoya 464-8602, Japan} \\
  {\small $^3$ Department of Mathematics, Shanghai University, Shanghai 200444, China} \\
  {\small $^4$ Newtouch Center for Mathematics of Shanghai University, Shanghai 200444, China}
}
\date{\today}

\begin{document}
\maketitle

\begin{abstract}
	
We develop the Sato-theoretic dressing framework for the anti-self-dual Yang--Mills (ASDYM) hierarchy based on a normalized Riemann--Hilbert decomposition. A four-sector expansion of the generating function yields a bi-infinite matrix array of relative coordinates, extending the affine coordinates on the big cell of the Sato Grassmannian. We derive the Sato--Wilson equations, continuous coordinate flows, and their discrete analogues. Several classical integrable hierarchies are recovered under dimensional reduction constraints, with their nonlinear variables identified as specific relative coordinates.

\end{abstract}

\smallskip
\noindent\textbf{Keywords.}
anti-self-dual Yang--Mills hierarchy; Riemann--Hilbert problem; Sato--Wilson equations; Ward conjecture; Miwa shifts.

\medskip
\noindent\textbf{MSC2020.}
37K10;35Q15; 35Q51; 37K60.

\begingroup
\small
\tableofcontents
\endgroup

\newpage
\section{Introduction}
\label{sec:introduction}

The anti-self-dual Yang--Mills (ASDYM) equations occupy a central position in mathematical physics \cite{Yang1977}. In twistor theory, the Penrose--Ward correspondence identifies anti-self-dual gauge fields with holomorphic vector bundles over twistor space. Restriction to an individual twistor line reduces the reconstruction of the gauge field to a matrix splitting problem for the associated patching matrix \cite{AtiyahWard1977,MasonWoodhouse1996,Ward1977}. From the viewpoint of integrable systems, this splitting problem is a Riemann--Hilbert problem, or equivalently a Birkhoff factorization. Its local factors are holomorphic near the distinguished points \(\lambda=0\) and \(\lambda=\infty\) and can be interpreted as dressing matrices \cite{Takasaki1984,Takasaki1985,UenoNakamura1982,UenoNakamura1983}. This places the patching function within the loop-group factorization framework underlying Sato theory \cite{PressleySegal1986,SegalWilson1985} and provides a natural setting for the ASDYM hierarchy \cite{KakeiIkedaTakasaki2002,Nakamura1988,Takasaki1990}.

Much of the significance of the ASDYM equations stems from their rich family of reductions to lower-dimensional integrable systems. In 1985, Ward proposed that perhaps all integrable systems could be obtained as reductions of the ASDYM equations \cite{Ward1985}. This proposal, now known as the Ward conjecture, has motivated a broad program of explicit reductions. Such reductions have been constructed through Lax representations \cite{AblowitzChakravartyHalburd2003,AblowitzChakravartyTakhtajan1993}, twistor methods \cite{MasonSparling1989,MasonWoodhouse1996,Strachan1993}, bilinear methods \cite{SasaOhtaMatsukidaira1998}, and direct reductions of nonlinear matrix equations \cite{GeWangWu1994,LiLiuZhang2025,LiMarunoZhang2026,BenincasaHalburd2016}.

Despite the many existing approaches, reductions developed in different formulations may take rather different forms and appear unrelated at first sight. Some methods are mutually compatible, whereas others are formulated under different conventions, making a systematic study of the Ward conjecture difficult. Based on the Sato-theoretic construction of the ASDYM hierarchy \cite{Takasaki1984,KakeiIkedaTakasaki2002} and the direct reduction scheme based on ASDYM matrix variables \cite{LiMarunoZhang2026,LiLiuZhang2025}, we aim to develop a unified framework for studying the Ward conjecture in this paper.

The main results of this paper can be summarized as follows.
\begin{enumerate}
	\item Within the Sato--theoretic construction for the ASDYM hierarchy, we extend the affine coordinates on the big cell of the Sato Grassmannian to a bi-infinite matrix array, whose entries we call relative coordinates.
	
	\item We revisit the continuous Sato--Wilson equations, auxiliary linear system, and coordinate flows. By employing Miwa shifts, we derive their discrete analogues, thereby unifying the continuous and discrete reductions of the ASDYM hierarchy within a single framework.
	
	\item The Sato-theoretic origin of the dimensional reduction constraint employed in \cite{LiMarunoZhang2026,LiLiuZhang2025} is identified. This reduction recovers several classical integrable systems, with their nonlinear variables expressible in terms of specific relative coordinates.
\end{enumerate}

The remainder of this paper is organized as follows. Starting from a normalized Riemann--Hilbert decomposition, Section~\ref{sec:dressing-matrices-relative-coordinates} introduces the dressing matrices and relative coordinates. Section~\ref{sec:sato-theoretic-construction} incorporates the time variables and develops the corresponding Sato-theoretic framework. In Section~\ref{sec:Ward-conjecture}, this framework is applied to derive several representative continuous and discrete integrable systems. Finally, Section~\ref{sec:concluding-remarks} concludes the paper with a brief discussion.

\section{Dressing matrices and relative coordinates}
\label{sec:dressing-matrices-relative-coordinates}

The Riemann--Hilbert problem provides the analytic origin of the dressing matrices. Their local expansions at \(\lambda=0\) and \(\lambda=\infty\) are then combined into a single bi-infinite array by a two-parameter generating function.

\subsection{Riemann--Hilbert problem and dressing matrices}

Fix constants \(0<r<R\), and set
\[
D_0:=\{\lambda\in\bbC:|\lambda|<R\},\qquad
D_\infty:=\{\lambda\in\bbC\cup\{\infty\}:|\lambda|>r\}.
\]
Let \(g(\lambda)\) be a holomorphic \(\GL_N(\bbC)\)-valued transition function on \(D_0\cap D_\infty\), and suppose that it admits the normalized Riemann--Hilbert decomposition
\begin{equation}
	g(\lambda)=W_0(\lambda)^{-1}W_\infty(\lambda),
	\label{eq:birkhoff-factorization}
\end{equation}
where \(W_0\) and \(W_\infty\) are holomorphic and invertible on \(D_0\) and \(D_\infty\), respectively, with the normalization condition \(W_\infty(\infty)=\Id_N\). Their local expansions have the form
\begin{equation}
	W_\infty(\lambda)=\Id_N-\sum_{n\geq1}w_n\lambda^{-n},\qquad
	W_0(\lambda)=\widehat w_0+\sum_{n\geq1}\widehat w_n\lambda^n.
	\label{eq:local-dressing-expansions}
\end{equation}
Set \(\bbZ_0:=\bbZ_{<0}\), \(\bbZ_\infty:=\bbZ_{\geq0}\), \(\varepsilon_0:=-1\), and \(\varepsilon_\infty:=1\). The coefficients of \eqref{eq:local-dressing-expansions} determine a sequence \(\{w_{0j}\}_{j\in\bbZ}\subset\Mat_N(\bbC)\) through
\begin{equation}
	W_a(\lambda)=\Id_N-\varepsilon_a\sum_{j\in\bbZ_a}w_{0j}\lambda^{-j-1},\qquad a\in\{0,\infty\}.
	\label{eq:dressing-row-expansion}
\end{equation}
Explicitly,
\[
w_{0,-1}=\widehat w_0-\Id_N,\qquad
w_{0,-n-1}=\widehat w_n,\qquad
w_{0,n-1}=w_n,\qquad n\geq1.
\]
For consistency with the bi-infinite array introduced later,
we call this sequence the zeroth row. Similarly, the local expansions of the inverses define the zeroth column \(\{w_{i0}\}_{i\in\bbZ}\subset\Mat_N(\bbC)\) by
\begin{equation}
	W_a(\lambda)^{-1}=\Id_N+\varepsilon_a\sum_{i\in\bbZ_a}w_{i0}\lambda^{-i-1},\qquad a\in\{0,\infty\}.
	\label{eq:dressing-column-expansion}
\end{equation}

\subsection{Relative coordinates and recurrence relation}

We now combine the zeroth row and column into a bi-infinite matrix array.

\begin{proposition}[Generating function]
	\label{prop:kernel-expansion}
	Let \(\eta\) and \(\zeta\) be independent spectral parameters, and set \(u:=\eta^{-1}\) and \(v:=\zeta^{-1}\). Define
	\[
	\mathscr M_{ab}:=
	\begin{cases}
		\Mat_N\bigl(\bbC\{\eta,\zeta\}\bigr),&(a,b)=(0,0),\\[1mm]
		v\Mat_N\bigl(\bbC\{\eta,v\}\bigr),&(a,b)=(0,\infty),\\[1mm]
		u\Mat_N\bigl(\bbC\{u,\zeta\}\bigr),&(a,b)=(\infty,0),\\[1mm]
		uv\Mat_N\bigl(\bbC\{u,v\}\bigr),&(a,b)=(\infty,\infty),
	\end{cases}
	\]
	where \(\bbC\{x,y\}\) denotes the ring of convergent power series at \((x,y)=(0,0)\). For \(a,b\in\{0,\infty\}\), the pair \((a,b)\) labels the sector in which \(\eta\) is expanded at \(a\) and \(\zeta\) at \(b\). Define 
	\begin{equation}
		\mathcal W_{ab}(\eta,\zeta):=
		\frac{W_a(\eta)^{-1}W_b(\zeta)-\Id_N}{\zeta-\eta}.
		\label{eq:relative-kernel-identity}
	\end{equation}
	Then \(\mathcal W_{ab}(\eta,\zeta)\in\mathscr M_{ab}\), and  the following boundary identities hold
	\begin{equation}
		[\zeta^{-1}]\mathcal W_{a\infty}(\eta,\zeta)=W_a(\eta)^{-1}-\Id_N,
		\qquad
		[\eta^{-1}]\mathcal W_{\infty b}(\eta,\zeta)=\Id_N-W_b(\zeta).
		\label{eq:boundary-condition}
	\end{equation}
\end{proposition}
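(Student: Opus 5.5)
We treat the four sectors separately, using only the holomorphy of \(W_0^{\pm1}\) on \(D_0\), of \(W_\infty^{\pm1}\) on \(D_\infty\), and the normalization \(W_\infty(\infty)=\Id_N\). In each sector the numerator \(F_{ab}(\eta,\zeta):=W_a(\eta)^{-1}W_b(\zeta)-\Id_N\) is holomorphic in the appropriate local variables. The point to check is that division by \(\zeta-\eta\) produces no singularity. Write \(\zeta-\eta=uv\,(u-v)\cdot(\text{sign})/(\ldots)\) in the coordinates natural to the sector. For example \(\zeta-\eta = (u-v)/(uv)\) when both variables are near \(\infty\), \(\zeta-\eta=(1-\eta v)/v\) in sector \((0,\infty)\), and \(\zeta-\eta=(\zeta u-1)/u\) in sector \((\infty,0)\).

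\emph{Sectors \((0,\infty)\) and \((\infty,0)\).} In \((0,\infty)\) we have
\[
\mathcal W_{0\infty}=\frac{v\,F_{0\infty}}{1-\eta v},
\]
where \(F_{0\infty}\) is holomorphic near \((\eta,v)=(0,0)\) and \(1-\eta v\) is a unit there. Hence \(\mathcal W_{0\infty}\in v\,\Mat_N(\bbC\{\eta,v\})\). Symmetrically, \(\mathcal W_{\infty0}=-u F_{\infty0}/(1-\zeta u)\in u\,\Mat_N(\bbC\{u,\zeta\})\).

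\emph{Sectors \((0,0)\) and \((\infty,\infty)\).} Here the denominator vanishes on the diagonal. Since \(F_{aa}(\lambda,\lambda)=W_a(\lambda)^{-1}W_a(\lambda)-\Id_N=0\), the numerator is divisible by \(\zeta-\eta\) in the ring of convergent power series: a holomorphic function vanishing on the smooth hypersurface \(\{\zeta=\eta\}\) is a multiple of its defining function (Hadamard's lemma / Weierstrass division). This gives \(\mathcal W_{00}\in\Mat_N(\bbC\{\eta,\zeta\})\).

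For \((\infty,\infty)\), write
\[
\mathcal W_{\infty\infty}=uv\,\frac{F_{\infty\infty}}{u-v}.
\]
The numerator \(F_{\infty\infty}\) is holomorphic in \((u,v)\) and vanishes at \(u=v\), so \(F_{\infty\infty}/(u-v)\in\Mat_N(\bbC\{u,v\})\), and therefore \(\mathcal W_{\infty\infty}\in uv\,\Mat_N(\bbC\{u,v\})\). The normalization \(W_\infty(\infty)=\Id_N\) is not needed for the divisibility itself; it is what makes \(F\) vanish at \(u=0\) or \(v=0\), consistent with the boundary identities below.

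The main subtlety is this divisibility step in the diagonal sectors. I would make it explicit by the integral formula
\[
\frac{F(\eta,\zeta)-F(\eta,\eta)}{\zeta-\eta}=\int_0^1\partial_\zeta F(\eta,\eta+t(\zeta-\eta))\,dt,
\]
which is manifestly holomorphic.

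\emph{Boundary identities.} In sector \((a,\infty)\) we have \(\mathcal W_{a\infty}=vF_{a\infty}/(1-\eta v)\) (for \(a=\infty\) use the appropriate form in \(u\)), so the coefficient of \(\zeta^{-1}=v\) is obtained by setting \(v=0\) in \(F_{a\infty}/(1-\eta v)\). This gives
\[
F_{a\infty}(\eta,\infty)=W_a(\eta)^{-1}W_\infty(\infty)-\Id_N=W_a(\eta)^{-1}-\Id_N.
\]
The case \(a=\infty\) is checked the same way: \(vF/(1-v/u)\) expands in \(v\), and the order-one term is \(F(u,0)\), which is fine as a formal expansion in the regime \(|\zeta|>|\eta|\) appropriate to the sector. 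Similarly \([\eta^{-1}]\mathcal W_{\infty b}\) equals \(-F_{\infty b}(\infty,\zeta)=\Id_N-W_b(\zeta)\), using \(W_\infty(\infty)^{-1}=\Id_N\); the minus sign comes from \(\zeta-\eta=-(1-\zeta u)/u\).
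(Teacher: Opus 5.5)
Your proof is correct and follows essentially the same route as the paper. The paper uses the same sectorwise factorizations $1/(\zeta-\eta)=v/(1-\eta v)=-u/(1-u\zeta)$ in the mixed sectors, the same divisibility of the numerator by $\zeta-\eta$ (resp.\ $u-v$) in the diagonal sectors, and the same direct extraction of the $v^1$ and $u^1$ coefficients for the boundary identities; your integral formula spells out the divisibility step, which the paper only asserts. One harmless slip: the normalization $W_\infty(\infty)=\Id_N$ does not make $F_{\infty\infty}$ vanish on $u=0$ or $v=0$ (for instance $F_{\infty\infty}(0,v)=W_\infty(\zeta)-\Id_N$), and it enters only through $W_\infty(\infty)=\Id_N$ in the boundary identities, exactly as in the paper.
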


\begin{proof}
	For later use, denote the numerator by
	\[
	N_{ab}(\eta,\zeta):=
	W_a(\eta)^{-1}W_b(\zeta)-\Id_N.
	\]
	We treat the four sectors separately. Whenever a spectral parameter is
	expanded at infinity, we use the corresponding local coordinates
	$u:=\eta^{-1}$ and $v:=\zeta^{-1}$.
	
	In the \((0,0)\)-sector, \(N_{00}\) is holomorphic near \((0,0)\) and vanishes for \(\zeta=\eta\). Hence
	\[
	N_{00}(\eta,\zeta)=(\zeta-\eta)H_{00}(\eta,\zeta),
	\qquad H_{00}\in\Mat_N\bigl(\bbC\{\eta,\zeta\}\bigr),
	\]
	thus \(\mathcal W_{00}=H_{00}\in\mathscr M_{00}\). Likewise, in the $(\infty,\infty)$-sector, one has
	\[
	N_{\infty\infty}(u,v)=(u-v)H_{\infty\infty}(u,v),
	\qquad H_{\infty\infty}\in\Mat_N\bigl(\bbC\{u,v\}\bigr).
	\]
	Since \(\zeta-\eta=(u-v)/(uv)\), it follows that \(\mathcal W_{\infty\infty}=uvH_{\infty\infty}\in\mathscr M_{\infty\infty}\).
	
	In the mixed sectors, 
	\[
	\frac{1}{\zeta-\eta}=\frac{v}{1-\eta v}=-\frac{u}{1-u\zeta}.
	\]
	The factors \((1-\eta v)^{-1}\) and \((1-u\zeta)^{-1}\) are convergent near the origin. Hence
	\[
	\mathcal W_{0\infty}=\frac{v}{1-\eta v}N_{0\infty}(\eta,v)\in\mathscr M_{0\infty},
	\qquad
	\mathcal W_{\infty0}=-\frac{u}{1-u\zeta}N_{\infty0}(u,\zeta)\in\mathscr M_{\infty0}.
	\]
	
	It remains to verify the boundary identities. In the
	\((\infty,\infty)\)-sector, we have
	\begin{align*}
		[\zeta^{-1}]\,\mathcal W_{\infty\infty}(\eta,\zeta)
		&=
		uH_{\infty\infty}(u,0)
		=
		N_{\infty\infty}(u,0)
		=
		W_\infty(\eta)^{-1}-\Id_N,
		\\
		[\eta^{-1}]\,\mathcal W_{\infty\infty}(\eta,\zeta)
		&=
		vH_{\infty\infty}(0,v)
		=
		-N_{\infty\infty}(0,v)
		=
		\Id_N-W_\infty(\zeta),
	\end{align*}
	where the normalization condition $W_\infty(\infty)=\Id_N$ has been used.
	Similarly, in the $(0,\infty)$-sector, we have
	\[
	[\zeta^{-1}]\,\mathcal W_{0\infty}(\eta,\zeta)
	=
	N_{0\infty}(\eta,0)
	=
	W_0(\eta)^{-1}-\Id_N.
	\]
	In the $(\infty,0)$-sector, we have
	\[
	[\eta^{-1}]\,\mathcal W_{\infty0}(\eta,\zeta)
	=
	-N_{\infty0}(0,\zeta)
	=
	\Id_N-W_0(\zeta).
	\]
	These calculations establish the boundary identities
	\eqref{eq:boundary-condition} and complete the proof.
\end{proof}

\begin{corollary}[Relative coordinates]
	\label{cor:relative-coordinates}
	The four sectorwise expansions of \(\mathcal W_{ab}\) determine a unique bi-infinite array \(\{w_{ij}\}_{(i,j)\in\bbZ^2}\subset\Mat_N(\bbC)\) such that
	\begin{equation}
		\mathcal W_{ab}(\eta,\zeta)=\varepsilon_a\varepsilon_b
		\sum_{i\in\bbZ_a}\sum_{j\in\bbZ_b}w_{ij}\eta^{-i-1}\zeta^{-j-1}.
		\label{eq:relative-coordinate-expansion}
	\end{equation}
	This array extends the zeroth column \(\{w_{i0}\}_{i\in\bbZ}\) and zeroth row \(\{w_{0j}\}_{j\in\bbZ}\).
\end{corollary}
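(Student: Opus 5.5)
The plan is to read off \(w_{ij}\) as Taylor coefficients in each sector, using the membership \(\mathcal W_{ab}\in\mathscr M_{ab}\) from Proposition~\ref{prop:kernel-expansion}. The four index quadrants \(\bbZ_a\times\bbZ_b\) partition \(\bbZ^2\), so each \((i,j)\) lies in exactly one sector. This means every \(w_{ij}\) will be defined by precisely one of the four series, and there is no compatibility condition to check between sectors.

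First I match exponents in each sector, with monomials \(\eta^{-i-1}\zeta^{-j-1}\).
\begin{itemize}
\item In the \((0,0)\)-sector, \(i,j<0\), so these monomials are \(\eta^{m}\zeta^{n}\) with \(m=-i-1\geq0\) and \(n=-j-1\geq0\). This is exactly the monomial basis of \(\bbC\{\eta,\zeta\}\).
\item In the \((\infty,\infty)\)-sector, \(i,j\geq0\), so the monomials are \(u^{i+1}v^{j+1}\). This is exactly the monomial basis of \(uv\,\bbC\{u,v\}\).
\item The mixed sectors work the same way: \(\eta^{-i-1}v^{j+1}\) spans \(v\,\bbC\{\eta,v\}\), and \(u^{i+1}\zeta^{-j-1}\) spans \(u\,\bbC\{u,\zeta\}\).
\end{itemize}
Hence each \(\mathcal W_{ab}\) has a unique expansion of the form \eqref{eq:relative-coordinate-expansion}. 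We define \(w_{ij}\) as \(\varepsilon_a\varepsilon_b\) times the corresponding Taylor coefficient. Uniqueness follows from uniqueness of power series coefficients.

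Second, I check that this array extends the given zeroth row and column, which uses the boundary identities \eqref{eq:boundary-condition}.
\begin{itemize}
\item For \(i\geq0\), take \(j=0\) and \(b=\infty\). The coefficient of \(\zeta^{-1}\) in \eqref{eq:relative-coordinate-expansion} is \(\varepsilon_a\sum_{i\in\bbZ_a}w_{i0}\eta^{-i-1}\). By \eqref{eq:boundary-condition} it equals \(W_a(\eta)^{-1}-\Id_N\), and comparing with \eqref{eq:dressing-column-expansion} recovers the zeroth column for all \(i\in\bbZ\), for both values of \(a\).
\item For the row, take \(i=0\) and \(a=\infty\). The coefficient of \(\eta^{-1}\) is \(\varepsilon_b\sum_{j\in\bbZ_b}w_{0j}\zeta^{-j-1}\), which equals \(\Id_N-W_b(\zeta)\). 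This matches \eqref{eq:dressing-row-expansion}, so the zeroth row is recovered.
\end{itemize}

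The only delicate point is the sign bookkeeping with \(\varepsilon_a,\varepsilon_b\) in the mixed sectors. I expect this to be a routine check rather than a real obstacle, since the zeroth row and column lie entirely in sectors with \(b=\infty\) or \(a=\infty\), where \(\varepsilon=1\).
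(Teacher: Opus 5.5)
Your proposal is correct and is essentially the paper's proof. Each $\mathcal W_{ab}\in\mathscr M_{ab}$ has a unique convergent expansion in the monomials $\eta^{-i-1}\zeta^{-j-1}$, and since the quadrants $\bbZ_a\times\bbZ_b$ disjointly partition $\bbZ^2$, this gives the unique array; the extension property then follows by comparing the boundary identities \eqref{eq:boundary-condition} with \eqref{eq:dressing-row-expansion} and \eqref{eq:dressing-column-expansion}, exactly as you do, with one cosmetic slip: your column check opens with ``for $i\geq0$'' where it should read ``for all $i\in\bbZ$'', as your own conclusion covering both values of $a$ already indicates.
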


\begin{proof}
	Each element of \(\mathscr M_{ab}\) has a unique convergent expansion in the monomials $\eta^{-i-1}\zeta^{-j-1}$. Since the four sets \(\bbZ_a\times\bbZ_b\) with \(a,b\in\{0,\infty\}\) form a disjoint partition of \(\mathbb Z^2\), these expansions determine a unique bi-infinite matrix array satisfying \eqref{eq:relative-coordinate-expansion}. The extension property follows immediately from \eqref{eq:dressing-row-expansion}, \eqref{eq:dressing-column-expansion} and \eqref{eq:boundary-condition}.
\end{proof}

We call the entries \(w_{ij}\) the relative coordinates. Indeed, the subfamily \(i,j\geq0\) is the usual array of affine coordinates on the big cell of the Sato Grassmannian \cite{Takasaki1984,Takasaki1985}. Proposition~\ref{prop:kernel-expansion} and Corollary~\ref{cor:relative-coordinates} hence extend the affine coordinates to an array indexed by \((i,j)\in\bbZ^2\).

\begin{theorem}[Recurrence relation]
	\label{thm:relative-coordinate-recurrence}
	The relative coordinates satisfy
	\begin{equation}
		w_{i+1,j}-w_{i,j+1}=w_{i0}w_{0j},\qquad i,j\in\bbZ.
		\label{eq:relative-coordinate-recurrence}
	\end{equation}
\end{theorem}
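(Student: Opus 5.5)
The plan is to multiply the defining identity \eqref{eq:relative-kernel-identity} through by \(\zeta-\eta\) and compare coefficients of \(\eta^{-i-1}\zeta^{-j-1}\) sector by sector. In each sector \((a,b)\) we have
\[
(\zeta-\eta)\,\mathcal W_{ab}(\eta,\zeta)=W_a(\eta)^{-1}W_b(\zeta)-\Id_N
=\bigl(W_a(\eta)^{-1}-\Id_N\bigr)+\bigl(W_b(\zeta)-\Id_N\bigr)+\bigl(W_a(\eta)^{-1}-\Id_N\bigr)\bigl(W_b(\zeta)-\Id_N\bigr).
\]
By \eqref{eq:dressing-column-expansion} and \eqref{eq:dressing-row-expansion}, the right-hand side equals
\[
\varepsilon_a\sum_{i\in\bbZ_a}w_{i0}\eta^{-i-1}-\varepsilon_b\sum_{j\in\bbZ_b}w_{0j}\zeta^{-j-1}-\varepsilon_a\varepsilon_b\sum_{i\in\bbZ_a}\sum_{j\in\bbZ_b}w_{i0}w_{0j}\eta^{-i-1}\zeta^{-j-1}.
\]

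For the left-hand side I will insert \eqref{eq:relative-coordinate-expansion}. Multiplication by \(\zeta\) shifts \(j\mapsto j+1\), and multiplication by \(\eta\) shifts \(i\mapsto i+1\). Consequently the coefficient of \(\eta^{-i-1}\zeta^{-j-1}\) in \((\zeta-\eta)\mathcal W_{ab}\) is \(\varepsilon_a\varepsilon_b(w_{i,j+1}-w_{i+1,j})\), provided both \((i,j+1)\) and \((i+1,j)\) stay in \(\bbZ_a\times\bbZ_b\). In the interior of each sector, equating with the double sum above gives exactly \eqref{eq:relative-coordinate-recurrence}.

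The main obstacle is the boundary indices, where a shift leaves the sector: \(j=-1\) when \(b=0\), \(i=-1\) when \(a=0\), and also the lowest-order terms when \(b=\infty\) or \(a=\infty\). I will not do this case by case. Instead I will exploit that the array is globally defined on \(\bbZ^2\).

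Fix \((i,j)\) and choose the sector containing \((i,j)\). The recurrence involves \(w_{i+1,j}\) and \(w_{i,j+1}\), and these may lie in neighbouring sectors. Those crossings are governed exactly by the boundary identities \eqref{eq:boundary-condition} together with the edge data \(w_{i0},w_{0j}\).

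Concretely, take the \((\infty,\infty)\)-sector with \(j=0\) and \(i\ge0\). The \(\zeta^{0}\)-coefficient of \((\zeta-\eta)\mathcal W_{\infty\infty}\) is \([\zeta^{-1}]\mathcal W_{\infty\infty}=W_\infty(\eta)^{-1}-\Id_N\), minus \(\eta\) times the \(\zeta^{0}\)-coefficient (which vanishes). Matching this against the single sum reproduces the definition of the zeroth column. The recurrence at indices whose shifted partner has \(j+1\) or \(i+1\) equal to \(0\) is then checked using the sector whose expansion actually contains \(w_{i,0}\) or \(w_{0,j}\).

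In the mixed sectors \((0,\infty)\) and \((\infty,0)\), the terms \(\eta^{-i-1}\) with \(i\ge0\) (respectively \(\zeta^{-j-1}\) with \(j<0\)) play the analogous role. Here I will rewrite \(\zeta-\eta\) in local coordinates as in the proof of Proposition~\ref{prop:kernel-expansion}. This tracks which coefficients are genuinely present, and checks that the leftover boundary coefficients produce \(w_{i+1,j}=w_{i,j+1}+w_{i0}w_{0j}\), with one side equal to \(w_{i0}\) or \(w_{0j}\) by Corollary~\ref{cor:relative-coordinates}.

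I expect the only real work to be this bookkeeping of the four boundary lines \(i\in\{-1,0\}\), \(j\in\{-1,0\}\) and their signs \(\varepsilon_a\varepsilon_b\). The interior identity is immediate.
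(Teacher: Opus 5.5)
Your proposal is correct and follows the paper's proof: multiply the generating-function identity by $\zeta-\eta$, insert the row, column and array expansions, and compare coefficients sector by sector. The terms that leave the sector on the lines $i=-1$ or $j=-1$ are supplied by the single-variable expansions of $W_a(\eta)^{-1}$ and $W_b(\zeta)$ and matched to array entries through Corollary~\ref{cor:relative-coordinates}. To tidy your boundary bookkeeping: the instance at $(i,j)$ always comes from the sector containing $(i,j)$ itself (e.g.\ $(i,-1)$ with $i\ge0$ from the $(\infty,0)$-sector, not from the sector containing $w_{i0}$). The coefficients on the line $j=-1$ of the $(a,\infty)$-sectors and on $i=-1$ of the $(\infty,b)$-sectors, like the one in your ``concretely'' paragraph, are only consistency checks reproducing \eqref{eq:boundary-condition}.
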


\begin{proof}
	For \(a,b\in\{0,\infty\}\),
	equation \eqref{eq:relative-kernel-identity} is equivalent to
	\[
	(\zeta-\eta)\mathcal W_{ab}(\eta,\zeta)=W_a(\eta)^{-1}W_b(\zeta)-\Id_N.
	\]
	Substitution of \eqref{eq:dressing-row-expansion}, \eqref{eq:dressing-column-expansion}, and \eqref{eq:relative-coordinate-expansion}, followed by sectorwise coefficient comparison, gives \eqref{eq:relative-coordinate-recurrence}. Indeed, multiplication by \(\eta\) or \(\zeta\) shifts the corresponding index of $w_{ij}$, while \(i=-1\) and \(j=-1\) mark the terms independent of \(\eta\) and \(\zeta\), respectively.
\end{proof}

\section{Sato--theoretic construction of the ASDYM hierarchy}
\label{sec:sato-theoretic-construction}

We now introduce the hierarchy variables and derive the Sato--Wilson equations together with the induced continuous and discrete flows of the relative coordinates.

\subsection{Sato--Wilson equations and coordinate flows}

Let \(\bt=(t_0,t_1,t_2,\ldots)\) be a countable set of hierarchy variables with weights \(\wt(t_n):=n+1\). We allow the dressing matrices to depend on \(\bt\) and set
\begin{equation}
	g(\bt,\lambda):=W_0(\bt,\lambda)^{-1}W_\infty(\bt,\lambda),
	\qquad
	\lambda\in D_0\cap D_\infty.
	\label{eq:time-dependent-transition}
\end{equation}
For \(n\geq1\), introduce \(\mathcal D_n(\lambda):=\partial_n-\lambda\partial_{n-1}\). The Sato--Wilson evolution is governed by
\begin{equation}
	\mathcal D_n(\lambda)g(\bt,\lambda)=0,\qquad n\geq1.
	\label{eq:vacuum-equations}
\end{equation}

\begin{theorem}[Sato--Wilson equations]
	\label{thm:sato-wilson}
	Assume that \(g\) satisfies \eqref{eq:vacuum-equations}. Then, for each \(n\geq1\), there exists a unique \(\lambda\)-independent matrix function \(A_n(\bt)\) such that
	\begin{equation}
		\partial_nW_a(\bt,\lambda)=\lambda\partial_{n-1}W_a(\bt,\lambda)-A_n(\bt)W_a(\bt,\lambda),
		\qquad a\in\{0,\infty\}.
		\label{eq:sato-wilson-equation}
	\end{equation}
\end{theorem}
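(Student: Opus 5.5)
The plan is the standard dressing argument: build a quantity from \(W_0\) and \(W_\infty\) that agrees across the overlap, then use Liouville's theorem to pin down its \(\lambda\)-dependence. The operator \(\mathcal D_n(\lambda)=\partial_n-\lambda\partial_{n-1}\) is a derivation in \(\bt\), since \(\lambda\) is a constant parameter. Applying it to \(W_\infty=W_0\,g\) (rearranged from \eqref{eq:time-dependent-transition}) and using \eqref{eq:vacuum-equations} gives, on \(D_0\cap D_\infty\),
\[
\mathcal D_n(\lambda)W_\infty=\bigl(\mathcal D_n(\lambda)W_0\bigr)\,g .
\]
Multiplying on the right by \(W_\infty^{-1}=g^{-1}W_0^{-1}\) yields
\[
\bigl(\mathcal D_n(\lambda)W_\infty\bigr)W_\infty^{-1}=\bigl(\mathcal D_n(\lambda)W_0\bigr)W_0^{-1}=:B_n(\bt,\lambda).
\]
The left-hand side is holomorphic on \(D_\infty\) apart from the pole coming from the factor \(\lambda\). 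The right-hand side is holomorphic on \(D_0\). Gluing them therefore defines a function holomorphic on \(\bbC\).

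Next I bound \(B_n\) at infinity. From \eqref{eq:local-dressing-expansions}, \(W_\infty=\Id_N-w_1\lambda^{-1}-\cdots\), so
\[
\partial_nW_\infty=O(\lambda^{-1}),\qquad \lambda\,\partial_{n-1}W_\infty=-\partial_{n-1}w_1+O(\lambda^{-1}).
\]
Since also \(W_\infty^{-1}=\Id_N+O(\lambda^{-1})\), we get \(B_n=\partial_{n-1}w_1+O(\lambda^{-1})\). So \(B_n\) is entire and bounded, and Liouville's theorem makes it a constant in \(\lambda\). I will write this constant as \(-A_n(\bt)\), with \(A_n=-\partial_{n-1}w_1\); in the paper's notation \(w_1=w_{00}\). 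Rearranging \(\mathcal D_n(\lambda)W_a=-A_nW_a\) gives exactly \eqref{eq:sato-wilson-equation} for both \(a=0\) and \(a=\infty\).

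Uniqueness is immediate. If \(A_n\) and \(A_n'\) both satisfy \eqref{eq:sato-wilson-equation}, then \((A_n-A_n')W_\infty=0\), and \(W_\infty\) is invertible, so \(A_n=A_n'\). Alternatively, comparing the \(\lambda^0\) coefficients at infinity forces \(A_n=-\partial_{n-1}w_{00}\).

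The main obstacle is regularity, not algebra. The argument needs \(W_0\) and \(W_\infty\) to depend differentiably on \(\bt\), with their \(\bt\)-derivatives keeping the same holomorphy domains in \(\lambda\) and \(\partial_{n-1}W_\infty\) still vanishing at \(\infty\). These hold if the normalized factorization depends smoothly on \(\bt\), which I take as the implicit standing hypothesis, e.g.\ on a neighbourhood where the factorization persists. With that assumption the expansions can be differentiated termwise and the Liouville step above is justified.
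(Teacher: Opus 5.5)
Your proposal is correct and follows the paper's own argument. Like the paper, you show that \((\mathcal D_n(\lambda)W_a)W_a^{-1}\) coincides for \(a=0,\infty\) on the overlap, glue the two expressions, conclude \(\lambda\)-independence, and get uniqueness from invertibility of \(W_a\); your explicit check that \((\mathcal D_n W_\infty)W_\infty^{-1}=\partial_{n-1}w_1+O(\lambda^{-1})\), using \(W_\infty(\infty)=\Id_N\), fills in the step the paper leaves implicit when it calls the glued function holomorphic on the whole Riemann sphere, and it also gives \(A_n=\partial_{n-1}K\) directly.
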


\begin{proof}
	For later use, set
	\[
	H_{n,a}(\lambda):=\bigl(\mathcal D_n(\lambda)W_a(\lambda)\bigr)W_a(\lambda)^{-1}.
	\]
	Using \eqref{eq:time-dependent-transition} and \eqref{eq:vacuum-equations}, we obtain \(H_{n,0}=H_{n,\infty}\) on \(D_0\cap D_\infty\). The two expressions therefore define a holomorphic matrix function on the Riemann sphere and hence are independent of \(\lambda\). Writing their common value as \(-A_n\) gives \eqref{eq:sato-wilson-equation}; uniqueness is immediate.
\end{proof}

Equations~\eqref{eq:sato-wilson-equation} are the Sato--Wilson
equations of the ASDYM hierarchy \cite{KakeiIkedaTakasaki2002}. The matrices
\(A_n\) determined in
Theorem~\ref{thm:sato-wilson} will be referred to as the
gauge potentials. They admit the following two equivalent
representations \cite{MasonWoodhouse1996}.

\begin{corollary}[Matrix representations]
	\label{cor:matrix-representations}
	For each \(n\geq1\), the gauge potential admits the following representations
	\begin{equation}
		A_n=\partial_{n-1}K=-(\partial_nJ)J^{-1},
		\label{eq:matrix-representations}
	\end{equation}
	where
	\begin{equation}
		J:=\Id_N+w_{0,-1},\qquad K:=-w_{00}.
		\label{eq:J-K-definitions}
	\end{equation}
\end{corollary}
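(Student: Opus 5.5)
The plan is to read off both representations by comparing coefficients in the Sato--Wilson equation \eqref{eq:sato-wilson-equation}: the \(a=\infty\) case at \(\lambda=\infty\) and the \(a=0\) case at \(\lambda=0\). Both identifications come directly from the expansions \eqref{eq:dressing-row-expansion}, so only the definitions and Theorem~\ref{thm:sato-wilson} are needed.

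\emph{Representation via \(J\).} Take \(a=0\) in \eqref{eq:sato-wilson-equation} and set \(\lambda=0\). By \eqref{eq:local-dressing-expansions} and the explicit form of the zeroth row, \(W_0(0)=\widehat w_0=\Id_N+w_{0,-1}=J\). The term \(\lambda\partial_{n-1}W_0\) vanishes at \(\lambda=0\), so the equation reduces to
\(\partial_nJ=-A_nJ\).
Since \(W_0\) is invertible on \(D_0\), the matrix \(J\) is invertible. Hence \(A_n=-(\partial_nJ)J^{-1}\).

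\emph{Representation via \(K\).} Take \(a=\infty\) and expand in \(\lambda^{-1}\) using \(W_\infty=\Id_N-\sum_{j\ge0}w_{0j}\lambda^{-j-1}\), which is \eqref{eq:dressing-row-expansion} with \(\varepsilon_\infty=1\). The only \(\lambda^0\) contributions are as follows.
\begin{itemize}
\item On the left side, \(\partial_nW_\infty\) has no \(\lambda^0\) term, because of the normalization \(W_\infty(\infty)=\Id_N\).
\item In \(\lambda\partial_{n-1}W_\infty\), the \(\lambda^0\) coefficient is \(-\partial_{n-1}w_{00}\).
\item In \(-A_nW_\infty\), the \(\lambda^0\) coefficient is \(-A_n\).
\end{itemize}
Equating these gives \(0=-\partial_{n-1}w_{00}-A_n\), that is, \(A_n=-\partial_{n-1}w_{00}=\partial_{n-1}K\).

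The main point to check is that \(A_n\) is the same \(\lambda\)-independent matrix in both sectors. This is exactly the uniqueness statement of Theorem~\ref{thm:sato-wilson}, so the two representations coincide and \eqref{eq:matrix-representations} follows. The remaining care is bookkeeping of signs: \(\varepsilon_\infty=1\), the index shift \(w_{0,n-1}=w_n\), and the fact that \(n\ge1\), so \(\partial_{n-1}\) includes \(\partial_0\) when \(n=1\).
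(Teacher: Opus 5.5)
Your proposal is correct and follows the paper's argument: you take the $\lambda^0$ coefficient of the Sato--Wilson equation with $a=\infty$ to get $A_n=\partial_{n-1}K$, and the constant term at $\lambda=0$ with $a=0$ to get $\partial_nJ=-A_nJ$. You also make explicit that $J=W_0(0)$ is invertible, a detail the paper leaves implicit; the fact that a single $A_n$ serves both sectors is already part of the statement of Theorem~\ref{thm:sato-wilson}, so no separate check is needed.
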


\begin{proof}
	Using the normalized expansion at \(\lambda=\infty\), one has
	\[
	W_\infty=\Id_N-w_{00}\lambda^{-1}+O(\lambda^{-2})
	=\Id_N+K\lambda^{-1}+O(\lambda^{-2}).
	\]
	The coefficient of \(\lambda^0\) in \eqref{eq:sato-wilson-equation} gives \(A_n=\partial_{n-1}K\). The expansion at \(\lambda=0\) implies
	\[
	W_0=J+w_{0,-2}\lambda+O(\lambda^2),
	\]
	and the constant term of \eqref{eq:sato-wilson-equation}  gives \(A_n=-(\partial_nJ)J^{-1}\).
\end{proof}

	Defining the auxiliary linear operator as
	$$\mathcal L_n:=\partial_n-\lambda\partial_{n-1}+A_n,$$
	the
	Sato--Wilson equations take the form $\mathcal L_nW_a=0$. Their
	compatibility condition $[\mathcal L_m,\mathcal L_n]=0$, together
	with \eqref{eq:matrix-representations}, yields the Yang equation
	\cite{Yang1977}
	\begin{equation}
		\partial_{m-1}((\partial_nJ)J^{-1})-\partial_{n-1}((\partial_mJ)J^{-1})=0,
		\label{eq:Yang-equation}
	\end{equation}
	as well as the Leznov--Mukhtarov--Parkes (LMP) equation
	\cite{Leznov1987,LeznovMukhtarov1987,Parkes1992}
	\begin{equation}
		\partial_m\partial_{n-1}K-\partial_n\partial_{m-1}K-[\partial_{n-1}K,\partial_{m-1}K]=0.
		\label{eq:LMP-equation}
	\end{equation}

\begin{remark}
	In this paper, we consider the ASDYM hierarchy with a single sequence of hierarchy variables, in contrast to the formulation associated with toroidal Lie algebras in \cite{KakeiIkedaTakasaki2002}. A suitable specialization of \eqref{eq:Yang-equation} yields Ward's integrable chiral model \cite{Ward1988}, while equation~\eqref{eq:LMP-equation} can be identified with a dispersionless limit of the non-commutative potential Kadomtsev--Petviashvili hierarchy \cite{DimakisMullerHoissen2008}.
	
\end{remark}

\begin{theorem}[Coordinate flows]
	\label{thm:relative-coordinate-flow}
	Under the assumptions of Theorem~\ref{thm:sato-wilson}, the relative coordinates satisfy
	\begin{equation}
		\partial_nw_{ij}-\partial_{n-1}w_{i+1,j}
		=-w_{i0}\partial_{n-1}w_{0j},
		\qquad n\geq1,\quad i,j\in\bbZ.
		\label{eq:relative-coordinate-flow}
	\end{equation}
\end{theorem}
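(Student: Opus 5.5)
The plan is to work at the level of the generating function \(\mathcal W_{ab}(\eta,\zeta)\) of Proposition~\ref{prop:kernel-expansion}, derive a differential identity for it, and then extract \eqref{eq:relative-coordinate-flow} sectorwise via the expansion \eqref{eq:relative-coordinate-expansion}.

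\textbf{Step 1 (flows of the two dressing factors).} From Theorem~\ref{thm:sato-wilson}, the inverse satisfies
\[
\partial_n W_a(\eta)^{-1}=\eta\,\partial_{n-1}W_a(\eta)^{-1}+W_a(\eta)^{-1}A_n .
\]
The factor \(W_b(\zeta)\) satisfies
\[
\partial_nW_b(\zeta)=\zeta\partial_{n-1}W_b(\zeta)-A_nW_b(\zeta).
\]

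\textbf{Step 2 (flow of the numerator).} Apply the product rule to \(N_{ab}=W_a(\eta)^{-1}W_b(\zeta)-\Id_N\). The \(A_n\) terms cancel, which gives
\[
\partial_n N_{ab}=\eta\,(\partial_{n-1}W_a(\eta)^{-1})W_b(\zeta)+\zeta\,W_a(\eta)^{-1}\partial_{n-1}W_b(\zeta).
\]
On the other hand, \(\eta\,\partial_{n-1}N_{ab}\) equals \(\eta(\partial_{n-1}W_a^{-1})W_b+\eta W_a^{-1}\partial_{n-1}W_b\). Subtracting, we obtain
\[
\partial_nN_{ab}-\eta\,\partial_{n-1}N_{ab}=(\zeta-\eta)\,W_a(\eta)^{-1}\partial_{n-1}W_b(\zeta).
\]
Dividing by \(\zeta-\eta\) (the factor is \(t\)-independent) yields
\[
\partial_n\mathcal W_{ab}-\eta\,\partial_{n-1}\mathcal W_{ab}=W_a(\eta)^{-1}\,\partial_{n-1}W_b(\zeta).
\]

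\textbf{Step 3 (coefficient extraction).} On the left side, substitute \eqref{eq:relative-coordinate-expansion}. Multiplication by \(\eta\) shifts \(w_{ij}\to w_{i+1,j}\) in the coefficient of \(\eta^{-i-1}\zeta^{-j-1}\), as in the proof of Theorem~\ref{thm:relative-coordinate-recurrence}. On the right side, write
\[
W_a(\eta)^{-1}=\Id_N+\varepsilon_a\sum_{i\in\bbZ_a} w_{i0}\eta^{-i-1}
\]
by \eqref{eq:dressing-column-expansion}, and
\[
\partial_{n-1}W_b(\zeta)=-\varepsilon_b\sum_{j\in\bbZ_b}\partial_{n-1}w_{0j}\,\zeta^{-j-1}
\]
by \eqref{eq:dressing-row-expansion}. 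The product then contains the term \(-\varepsilon_a\varepsilon_b\, w_{i0}\partial_{n-1}w_{0j}\,\eta^{-i-1}\zeta^{-j-1}\). It also contains an \(\eta\)-independent part \(-\varepsilon_b\partial_{n-1}w_{0j}\zeta^{-j-1}\), which sits at the position \(i=-1\).

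\textbf{Main obstacle: the boundary index \(i=-1\).} In the sector \(a=0\), \(\eta\)-independent terms are exactly \(i=-1\). Here the shift term \(\eta\cdot\eta^{0}\) produces an \(\eta^{1}\)-type term (\(i=-2\)) and no \(\eta^0\) term. The extra constant part must therefore be matched with the convention that \(w_{i0}\) at \(i=-1\) enters via \(\Id_N+w_{-1,0}\), that is, the constant term of \(W_0^{-1}\).

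In the sector \(a=\infty\), the shift \(\eta\cdot\eta^{-1}\) produces a constant term involving \(w_{0j}\). This must match the \(i=-1\) equation, with the formal index \(w_{0,j}\) playing the role of \(w_{i+1,j}\) at \(i=-1\). I expect the boundary identities \eqref{eq:boundary-condition} and the extension property of Corollary~\ref{cor:relative-coordinates} to make these bookkeeping terms consistent.

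I would check the four sectors separately, treating \(i=-1\) explicitly, exactly as was done for Theorem~\ref{thm:relative-coordinate-recurrence}. Comparing coefficients of \(\eta^{-i-1}\zeta^{-j-1}\) and cancelling the sign \(\varepsilon_a\varepsilon_b\) then gives \eqref{eq:relative-coordinate-flow}.
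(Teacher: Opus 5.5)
Your proposal is correct and is essentially the paper's proof. Steps 1--2 give the same intermediate identity $\partial_n\mathcal W_{ab}-\eta\,\partial_{n-1}\mathcal W_{ab}=W_a(\eta)^{-1}\partial_{n-1}W_b(\zeta)$, and Step 3 is the same sectorwise coefficient comparison.

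The boundary case you flag closes directly once Corollary~\ref{cor:relative-coordinates} identifies the array's $w_{i0}$ and $w_{0j}$ with the dressing coefficients:
\begin{itemize}
\item In the $(0,b)$ sector, the $\eta^0$ coefficient of $W_0(\eta)^{-1}$ is $\Id_N-w_{-1,0}$, not $\Id_N+w_{-1,0}$ as you wrote. This gives exactly the $i=-1$ case, $\partial_nw_{-1,j}=(\Id_N-w_{-1,0})\partial_{n-1}w_{0j}$.
\item In the $(\infty,b)$ sector, the $\eta^0$ coefficients on both sides equal $\partial_{n-1}W_b(\zeta)$. That equation is therefore an identity and carries no information, rather than supplying the $i=-1$ case.
\end{itemize}
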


\begin{proof}
	For \(a,b\in\{0,\infty\}\), set
	\[
	M_{ab}(\eta,\zeta):=W_a(\eta)^{-1}W_b(\zeta)
	=N_{ab}(\eta,\zeta)+\Id_N.
	\]
	Using the Sato--Wilson equations \eqref{eq:sato-wilson-equation}, one obtains
		\begin{equation*}
		\partial_n N_{ab}
		=
		-\eta\,
		W_a(\eta)^{-1}
		\bigl(
		\partial_{n-1}W_a(\eta)
		\bigr)
		M_{ab}
		+
		\zeta\,
		W_a(\eta)^{-1}
		\partial_{n-1}W_b(\zeta),
	\end{equation*}
	where the terms involving \(A_n\) cancel. On the other hand,
	\[
	\partial_{n-1}N_{ab}
	=
	-
	W_a(\eta)^{-1}
	\bigl(
	\partial_{n-1}W_a(\eta)
	\bigr)
	M_{ab}
	+
	W_a(\eta)^{-1}
	\partial_{n-1}W_b(\zeta).
	\]
	Hence,
	\[
	\partial_nN_{ab}
	-
	\eta\partial_{n-1}N_{ab}
	=
	(\zeta-\eta)
	W_a(\eta)^{-1}
	\partial_{n-1}W_b(\zeta).
	\]
	It follows from \eqref{eq:relative-kernel-identity} that
	\begin{equation*}
		\partial_n\mathcal W_{ab}(\eta,\zeta)
		-
		\eta\,
		\partial_{n-1}\mathcal W_{ab}(\eta,\zeta)
		=
		W_a(\eta)^{-1}
		\partial_{n-1}W_b(\zeta).
	\end{equation*}
	Substitution of \eqref{eq:dressing-row-expansion}, \eqref{eq:dressing-column-expansion}, and \eqref{eq:relative-coordinate-expansion}, followed by sectorwise coefficient comparison, gives \eqref{eq:relative-coordinate-flow}. Thus the proof is completed.
\end{proof}

The coordinate flows indicate alternative forms of the ASDYM matrix
representations.

\begin{corollary}[Alternative matrix representations]
	\label{cor:alternative-matrix-representations}
	Define
	\begin{equation}
		P:=\Id_N-w_{-1,0},\qquad Q:=w_{-1,-1}.
		\label{eq:P-Q-definitions}
	\end{equation}
	Then \(P=J^{-1}\) and
	\begin{equation}
		\partial_nQ=-(\partial_{n-1}P)P^{-1}=J^{-1}\partial_{n-1}J.
		\label{eq:alternative-matrix-representations}
	\end{equation}
\end{corollary}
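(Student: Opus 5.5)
The plan is to read off all three identities from the low-order coefficients of the generating function, combined with the recurrence of Theorem~\ref{thm:relative-coordinate-recurrence} and the coordinate flows of Theorem~\ref{thm:relative-coordinate-flow}.

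\emph{Step 1: $P=J^{-1}$.} By \eqref{eq:dressing-column-expansion} with $a=0$, the constant term of $W_0(\lambda)^{-1}$ is $\Id_N-w_{-1,0}=P$, since $i=-1$ gives $\lambda^{0}$ and $\varepsilon_0=-1$. By \eqref{eq:dressing-row-expansion}, the constant term of $W_0(\lambda)$ is $\Id_N+w_{0,-1}=J$. Evaluating $W_0(0)^{-1}W_0(0)=\Id_N$ then gives $PJ=\Id_N$.

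\emph{Step 2: $\partial_nQ$ from the flows.} Apply \eqref{eq:relative-coordinate-flow} with $i=j=-1$:
\[
\partial_n w_{-1,-1}-\partial_{n-1}w_{0,-1}=-w_{-1,0}\,\partial_{n-1}w_{0,-1}.
\]
Rearranging,
\[
\partial_nQ=(\Id_N-w_{-1,0})\,\partial_{n-1}w_{0,-1}=P\,\partial_{n-1}J=J^{-1}\partial_{n-1}J,
\]
using $\partial_{n-1}J=\partial_{n-1}w_{0,-1}$ and Step 1.

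\emph{Step 3: the second form.} Differentiating $PJ=\Id_N$ gives $(\partial_{n-1}P)J+P\,\partial_{n-1}J=0$. Hence
\[
J^{-1}\partial_{n-1}J=P\,\partial_{n-1}J=-(\partial_{n-1}P)J=-(\partial_{n-1}P)P^{-1}.
\]
This establishes \eqref{eq:alternative-matrix-representations}.

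\emph{Main obstacle.} The only delicate point is the index and sign bookkeeping when specializing the sectorwise formulas to negative indices. For example, $w_{-1,0}$ must be confirmed to be the constant coefficient of $W_0^{-1}$ with the sign from $\varepsilon_0$. Likewise, \eqref{eq:relative-coordinate-flow} at $i=j=-1$ must be confirmed to involve the correct entries $w_{0,-1}$ and $w_{-1,0}$.

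As a cross-check, one could derive Step 2 directly. Take the $(0,0)$-sector identity $\partial_n\mathcal W_{00}-\eta\,\partial_{n-1}\mathcal W_{00}=W_0(\eta)^{-1}\partial_{n-1}W_0(\zeta)$ from the proof of Theorem~\ref{thm:relative-coordinate-flow}, and evaluate it at $\eta=\zeta=0$. Since $\mathcal W_{00}(0,0)=w_{-1,-1}$, this gives $\partial_nQ=W_0(0)^{-1}\partial_{n-1}W_0(0)=J^{-1}\partial_{n-1}J$, which avoids the index manipulations entirely.
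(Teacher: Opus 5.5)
Your proposal is correct and matches the paper's proof: Steps 2 and 3 are exactly the paper's specialization of \eqref{eq:relative-coordinate-flow} to $(i,j)=(-1,-1)$, followed by differentiating $PJ=\Id_N$. The only cosmetic difference is Step 1, where you read off $P=W_0(0)^{-1}$ and $J=W_0(0)$ directly rather than setting $(i,j)=(-1,-1)$ in \eqref{eq:relative-coordinate-recurrence}; that recurrence instance is the constant term of $W_0(\eta)^{-1}W_0(\zeta)=\Id_N+(\zeta-\eta)\mathcal W_{00}$ at $\eta=\zeta=0$, so it is the same identity.
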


\begin{proof}
	Setting \((i,j)=(-1,-1)\) in \eqref{eq:relative-coordinate-recurrence} gives
	\[
	(\Id_N-w_{-1,0})(\Id_N+w_{0,-1})=\Id_N,
	\]
	which is precisely
	$
	PJ=\Id_N,
	$
	and hence \(P=J^{-1}\). The same choice in \eqref{eq:relative-coordinate-flow} yields
	\[
	\partial_nQ=(\Id_N-w_{-1,0})\partial_{n-1}w_{0,-1}
	=J^{-1}\partial_{n-1}J.
	\]
	Differentiating \(P=J^{-1}\) gives the remaining equality.
\end{proof}

Theorem~\ref{thm:relative-coordinate-flow} is compatible with
Corollary~\ref{cor:matrix-representations}. Indeed, setting
$(i,j)=(-1,0)$ in \eqref{eq:relative-coordinate-flow} gives
\[
(I_N-w_{-1,0})\partial_{n-1}w_{00}=\partial_nw_{-1,0},
\]
which, upon using $P=J^{-1}$, implies $\partial_{n-1}K=P^{-1}(\partial_nP)=-(\partial_nJ)J^{-1}$.

\subsection{Dimensional reduction constraints}

Fix a constant matrix \(C\in\Mat_N(\bbC)\), and define
\begin{equation}
	E(\bt,\lambda):=\exp\bigl(-C\xi(\bt,\lambda)\bigr),
	\qquad
	\xi(\bt,\lambda):=\sum_{n\geq0}t_n\lambda^n.
	\label{eq:exponential-vacuum}
\end{equation}
For every \(n\geq0\), differentiation gives
\begin{equation}
	\partial_nE(\bt,\lambda)=-C\lambda^nE(\bt,\lambda),
	\label{eq:vacuum-derivative}
\end{equation}
Hence,
$\mathcal D_n(\lambda)E=0,$
so that \(E\) serves as the vacuum state. Define the Baker functions
\[
\Psi_a(\bt,\lambda):=W_a(\bt,\lambda)E(\bt,\lambda),
\qquad a\in\{0,\infty\}.
\]
Equations \eqref{eq:sato-wilson-equation} and \eqref{eq:vacuum-derivative} give the auxiliary linear system
\begin{equation}
	(\partial_n-\lambda\partial_{n-1}+A_n)\Psi_a=0,
	\qquad n\geq1,
	\label{eq:baker-linear-system}
\end{equation}
and the corresponding transition function is
\begin{equation}
	F(\bt,\lambda):=\Psi_0(\bt,\lambda)^{-1}\Psi_\infty(\bt,\lambda)
	=E(\bt,\lambda)^{-1}g(\bt,\lambda)E(\bt,\lambda).
	\label{eq:baker-transition}
\end{equation}

\begin{theorem}[Dimensional reduction constraints]
	\label{thm:dimensional-reduction}
	Assume that \(F\) is independent of all hierarchy variables. Then
	\begin{equation}
		\partial_0W_a=[W_a,C],\qquad \partial_0\Psi_a=-C\Psi_a,
		\qquad a\in\{0,\infty\}.
		\label{eq:dimensional-reduction-dressing}
	\end{equation}
	In particular,
	\begin{equation}
		\partial_0w_{ij}=[w_{ij},C],\qquad (i,j)\in\bbZ^2.
		\label{eq:dimensional-reduction-coordinates}
	\end{equation}
\end{theorem}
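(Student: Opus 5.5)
We use the $\partial_0$-independence of $F$ directly, in the same Liouville-type way as in the proof of Theorem~\ref{thm:sato-wilson}. By \eqref{eq:baker-transition} we have $\Psi_0^{-1}\Psi_\infty=F$, so $\partial_0F=0$ gives
\[
(\partial_0\Psi_0)\Psi_0^{-1}=(\partial_0\Psi_\infty)\Psi_\infty^{-1}\qquad\text{on }D_0\cap D_\infty .
\]
The left-hand side is holomorphic in $\lambda$ on $D_0$ and the right-hand side on $D_\infty$. Indeed, $(\partial_0\Psi_a)\Psi_a^{-1}=(\partial_0W_a)W_a^{-1}+W_a(\partial_0E)E^{-1}W_a^{-1}$, and by \eqref{eq:vacuum-derivative} with $n=0$ we have $(\partial_0E)E^{-1}=-C$, which is independent of $\lambda$. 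Hence
\[
(\partial_0\Psi_a)\Psi_a^{-1}=(\partial_0W_a)W_a^{-1}-W_aCW_a^{-1}.
\]
The two sides glue to an entire function on the Riemann sphere, so the common value is a $\lambda$-independent matrix $B(\bt)$.

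Next we evaluate $B$ at $\lambda=\infty$ using the normalization $W_\infty(\infty)=\Id_N$. Since $\partial_0W_\infty=O(\lambda^{-1})$, we get $B=-C$. Substituting back gives $(\partial_0W_a)W_a^{-1}=W_aCW_a^{-1}-C$. Multiplying on the right by $W_a$ yields $\partial_0W_a=W_aC-CW_a=[W_a,C]$. Consequently $\partial_0\Psi_a=B\Psi_a=-C\Psi_a$, which proves \eqref{eq:dimensional-reduction-dressing}. The one point needing care is this normalization step. The whole argument hinges on $F$ being independent of $t_0$, and on the vacuum factor contributing exactly the constant $-C$; the other hierarchy variables are not needed for this statement.

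For \eqref{eq:dimensional-reduction-coordinates}, differentiating $W_aW_a^{-1}=\Id_N$ gives $\partial_0(W_a^{-1})=-W_a^{-1}(\partial_0W_a)W_a^{-1}=[W_a^{-1},C]$. Hence
\[
\partial_0\bigl(W_a(\eta)^{-1}W_b(\zeta)\bigr)=[W_a(\eta)^{-1},C]\,W_b(\zeta)+W_a(\eta)^{-1}[W_b(\zeta),C]=[W_a(\eta)^{-1}W_b(\zeta),C].
\]
Since $[\Id_N,C]=0$, this means $[N_{ab},C]=\partial_0N_{ab}$. Dividing by the scalar $\zeta-\eta$ and using \eqref{eq:relative-kernel-identity} gives $\partial_0\mathcal W_{ab}=[\mathcal W_{ab},C]$ in each sector. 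Commutation with the constant matrix $C$ acts coefficientwise, so comparing coefficients in the expansion \eqref{eq:relative-coordinate-expansion} of Corollary~\ref{cor:relative-coordinates} gives $\partial_0w_{ij}=[w_{ij},C]$ for all $(i,j)\in\bbZ^2$.
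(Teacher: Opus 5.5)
Your proposal is correct and is essentially the paper's argument. The paper first rewrites $\partial_0F=0$ as $\partial_0 g=[g,C]$ and then substitutes $g=W_0^{-1}W_\infty$, whereas you work directly with $(\partial_0\Psi_a)\Psi_a^{-1}$; both produce the same gluing identity and then use the same Liouville and normalization step and the same coefficientwise comparison (your remark that only $t_0$-independence of $F$ is needed is also accurate, since the paper's use of $\partial_n g=\lambda^n\partial_0 g$ only shows the conditions for different $n$ are equivalent).
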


\begin{proof}
	Taking the $t_n$-derivative of $F$ yields
	\[
	\partial_nF=\partial_n(E^{-1}gE)=E^{-1}\left(\partial_ng-\lambda^n[g,C]\right)E.
	\]
	On the other hand, iterating the relations $\mathcal D_n(\lambda)g=0$ gives $\partial_ng=\lambda^n\partial_0g$. Thus $\partial_nF=0$ is equivalent to $\partial_0g=[g,C]$.
	Substituting \(g=W_0^{-1}W_\infty\), we obtain
	\begin{equation}
		(\partial_0W_\infty)W_\infty^{-1}-W_\infty CW_\infty^{-1}
		=(\partial_0W_0)W_0^{-1}-W_0CW_0^{-1}.
		\label{eq:t0-gluing}
	\end{equation}
	The two sides glue to a holomorphic matrix function on the Riemann sphere and are therefore independent of \(\lambda\). The normalization at $\lambda=\infty$ shows that their common value is \(-C\), which gives the first equation in \eqref{eq:dimensional-reduction-dressing}. The second follows from \(\partial_0E=-CE\). Differentiating \eqref{eq:relative-kernel-identity} then gives \(\partial_0\mathcal W_{ab}=[\mathcal W_{ab},C]\), and coefficient comparison then yields \eqref{eq:dimensional-reduction-coordinates}.
\end{proof}

We further emphasize that the four ASDYM matrix variables defined in
\eqref{eq:J-K-definitions} and \eqref{eq:P-Q-definitions} satisfy the same
\(t_0\)-flow constraint:
\begin{equation}
	\partial_0J=[J,C],\qquad \partial_0K=[K,C],\qquad
	\partial_0P=[P,C],\qquad \partial_0Q=[Q,C].
	\label{eq:matrix-t0-reduction}
\end{equation}

\begin{remark}
	For $a\in\{0,\infty\}$,
	the Sato--Wilson equations \eqref{eq:sato-wilson-equation} admit a telescoped form \cite{Takasaki1984,KakeiIkedaTakasaki2002}
	\begin{equation*}
		\partial_nW_a-\lambda^n\partial_0W_a+B_n(\lambda)W_a=0,
		\qquad
		B_n(\lambda):=\sum_{i=1}^n\lambda^{n-i}A_i,
	\end{equation*}
	or equivalently
	\begin{equation*}
		B_n(\lambda)=\bigl(\lambda^n(\partial_0W_\infty)W_\infty^{-1}\bigr)_{\geq0},
	\end{equation*}
	where \((\cdot)_{\geq0}\) denotes projection onto nonnegative powers of \(\lambda\). 
	Substitutingthe $t_0$-flow constraint $\partial_0W_a=[W_a,C]$ gives
	\begin{equation}
		\partial_nW_a-\lambda^nW_aC
		+\bigl(\lambda^nW_\infty CW_\infty^{-1}\bigr)_{\geq0}W_a=0.
		\label{eq:sato-wilson-telescoped}
	\end{equation}
	Likewise, the telescoped form of \eqref{eq:baker-linear-system} is given by
	\begin{equation}
		\partial_n\Psi_a=-(\lambda^nC+B_n(\lambda))\Psi_a,\qquad a\in\{0,\infty\}.
		\label{eq:linear-system-telescoped}
	\end{equation}
\end{remark}

\subsection{Miwa shifts and discrete analogues}

Miwa shifts provide a standard discretization of hierarchy flows \cite{DateJimboMiwa1982,JimboMiwa1983,Miwa1982,DateJimboMiwa1983}. Assume henceforth that $C$ is idempotent
\begin{equation}
	C=\Pi:=\begin{pmatrix}\Id_r&0\\0&0_{N-r}\end{pmatrix},
	\qquad 1\leq r\leq N-1,
	\qquad C^2=C.
	\label{eq:idempotent-C}
\end{equation}
Let \(h\) be a sufficiently small complex parameter such that the shifted Riemann--Hilbert decomposition remains valid. Define
\[
\mathbb T_h:=\exp\left(\sum_{n\geq1}\frac{h^n}{n}\partial_n\right),
\qquad
\mathbb T_hX(\bt)=X(\bt+[h]),
\qquad
[h]:=\left(0,h,\frac{h^2}{2},\frac{h^3}{3},\ldots\right).
\]
For later use, we adopt the abbreviations \(\widetilde X:=\mathbb T_hX\) and \(\Delta_hX:=(\widetilde X-X)/h\).

\begin{lemma}[Miwa shift of the exponential function]
	\label{lem:miwa-exponential}
	For the exponential function defined by \eqref{eq:exponential-vacuum} with the idempotent assumption \eqref{eq:idempotent-C}, we have
	\[
	\widetilde E(\bt,\lambda)=E(\bt,\lambda)(\Id_N-h\lambda C).
	\]
\end{lemma}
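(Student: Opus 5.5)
The shift \(\mathbb T_h\) acts only on \(t_1,t_2,\ldots\), so \(t_0\) is unchanged. We therefore compute \(\xi(\bt+[h],\lambda)\) directly from the definition in \eqref{eq:exponential-vacuum}. For \(|h\lambda|<1\),
\[
\xi(\bt+[h],\lambda)=\xi(\bt,\lambda)+\sum_{n\geq1}\frac{(h\lambda)^n}{n}=\xi(\bt,\lambda)-\log(1-h\lambda).
\]
The matrix \(C\) commutes with itself, so the exponential factorizes:
\[
\widetilde E=\exp\bigl(-C\xi(\bt,\lambda)\bigr)\exp\bigl(C\log(1-h\lambda)\bigr)=E(\bt,\lambda)\exp\bigl(C\log(1-h\lambda)\bigr).
\]
Equivalently, one can check that \(\mathbb T_h\) acts as the exponential of a derivation and apply \eqref{eq:vacuum-derivative} termwise. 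This gives \(\mathbb T_hE=E\exp\bigl(-C\sum_{n\ge1}h^n\lambda^n/n\bigr)\), which is the same expression with the opposite sign convention to be tracked carefully.

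The key step is to evaluate \(\exp(sC)\) for an idempotent \(C\). Since \(C^k=C\) for \(k\ge1\), we have \(\exp(sC)=\Id_N+(e^s-1)C\). With \(s=\log(1-h\lambda)\) this becomes \(\Id_N+(1-h\lambda-1)C=\Id_N-h\lambda C\). Hence \(\widetilde E=E(\Id_N-h\lambda C)\), which is the stated formula; it holds for \(|h\lambda|<1\) and then as an identity of analytic functions.

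The main point to watch is the sign. The vacuum carries \(-C\xi\), and \(\sum_{n\ge1}(h\lambda)^n/n=-\log(1-h\lambda)\), so the factor is \(\exp(+C\log(1-h\lambda))=(1-h\lambda)^C\). This sign is what produces \(\Id_N-h\lambda C\) rather than \((\Id_N-h\lambda C)^{-1}\). The other detail is that the shift vector \([h]\) has a zero in the \(t_0\) slot, so \(t_0\) contributes nothing. Beyond these two checks the argument is a direct computation.
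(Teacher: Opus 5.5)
Your proposal is correct and follows the paper's proof essentially verbatim: compute $E^{-1}\widetilde E=\exp\bigl(C\log(1-h\lambda)\bigr)$, then reduce it to $\Id_N-h\lambda C$ using $\e^{zC}=\Id_N+(\e^z-1)C$ for idempotent $C$. Your aside about an ``opposite sign convention'' in the derivation-based route is unnecessary, since $\exp\bigl(-C\sum_{n\ge1}(h\lambda)^n/n\bigr)$ is literally the same quantity as $\exp\bigl(C\log(1-h\lambda)\bigr)$, and the paper writes it exactly that way.
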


\begin{proof}
	The Miwa shift gives
	\[
	E^{-1}\widetilde E
	=\exp\left(-C\sum_{n\geq1}\frac{(h\lambda)^n}{n}\right)
	=\exp\bigl(C\log(1-h\lambda)\bigr).
	\]
	Since \(C^2=C\), one has \(\e^{zC}=\Id_N+(\e^z-1)C\). Taking
	\(z=\log(1-h\lambda)\)  proves the claim.
\end{proof}

\begin{theorem}[Miwa-shift-generated discrete analogue]
	\label{thm:miwa-dressing}
	Assume the hypotheses of Theorems~\ref{thm:sato-wilson} and \ref{thm:dimensional-reduction} hold. Then there exists a unique \(\lambda\)-independent matrix \(U_h\) such that
	\begin{equation}
		\widetilde\Psi_a(\lambda)=L_h(\lambda)\Psi_a(\lambda),
		\qquad
		L_h(\lambda):=\Id_N-h(\lambda C+U_h),
		\qquad a\in\{0,\infty\}.
		\label{eq:discrete-baker-system}
	\end{equation}
	Equivalently,
	\begin{equation}
		\Delta_hW_a(\lambda)
		=\lambda\bigl(\widetilde W_a(\lambda)C-CW_a(\lambda)\bigr)-U_hW_a(\lambda),
		\label{eq:discrete-sato-wilson}
	\end{equation}
	and
	\begin{equation}
		U_h=\widetilde K C-CK=-(\Delta_hJ)J^{-1}.
		\label{eq:discrete-gauge-potential}
	\end{equation}
\end{theorem}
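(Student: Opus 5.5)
The plan is to use the same gluing argument as in the proof of Theorem~\ref{thm:sato-wilson}, now applied to the Miwa-shifted Baker functions. Since $F$ is independent of the hierarchy variables (Theorem~\ref{thm:dimensional-reduction}), we have $\widetilde F=F$, that is,
\[
\widetilde\Psi_0^{-1}\widetilde\Psi_\infty=\Psi_0^{-1}\Psi_\infty
\]
on $D_0\cap D_\infty$. Consequently the two ratios $\widetilde\Psi_0\Psi_0^{-1}$ and $\widetilde\Psi_\infty\Psi_\infty^{-1}$ agree there. By Lemma~\ref{lem:miwa-exponential},
\[
\widetilde\Psi_a\Psi_a^{-1}=\widetilde W_a(\Id_N-h\lambda C)W_a^{-1}.
\]
For $a=0$ this is holomorphic on $D_0$. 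For $a=\infty$ it is holomorphic on $D_\infty\setminus\{\infty\}$ and grows at most linearly at $\lambda=\infty$. Gluing the two gives an entire function of at most linear growth, so by Liouville it is a polynomial of degree at most one in $\lambda$. Call this common value $L_h(\lambda)$.

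Next we identify the coefficients of $L_h$ from the expansion at $\infty$. Write $W_\infty=\Id_N+K\lambda^{-1}+O(\lambda^{-2})$, so $W_\infty^{-1}=\Id_N-K\lambda^{-1}+O(\lambda^{-2})$. Then
\[
\widetilde W_\infty(\Id_N-h\lambda C)W_\infty^{-1}=-h\lambda C+\Id_N-h(\widetilde KC-CK)+O(\lambda^{-1}).
\]
Hence $L_h=\Id_N-h(\lambda C+U_h)$ with $U_h=\widetilde KC-CK$, which is $\lambda$-independent. Uniqueness follows because $U_h$ is determined by $L_h=\widetilde\Psi_a\Psi_a^{-1}$. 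Evaluating instead at $\lambda=0$ using $W_0(0)=J$ gives $L_h(0)=\widetilde JJ^{-1}$. Comparing with $L_h(0)=\Id_N-hU_h$ yields $U_h=-(\Delta_hJ)J^{-1}$, which establishes \eqref{eq:discrete-gauge-potential}.

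For \eqref{eq:discrete-sato-wilson}, start from $\widetilde W_a(\Id_N-h\lambda C)=(\Id_N-h\lambda C-hU_h)W_a$. Expanding gives
\[
\widetilde W_a-W_a=h\lambda(\widetilde W_aC-CW_a)-hU_hW_a,
\]
and dividing by $h$ gives the stated relation. The equivalence is reversible: multiplying back by $E$ and using Lemma~\ref{lem:miwa-exponential} recovers \eqref{eq:discrete-baker-system}.

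The main obstacle is the analytic justification. One needs the shifted decomposition to be valid on the same annulus, which is the standing smallness assumption on $h$, so that the gluing and Liouville step goes through. One also needs to confirm that $\widetilde F=F$, which follows from $\mathbb T_h$ being built from the $\partial_n$ with $n\geq1$, all of which annihilate $F$. The $t_0$-constraint itself is not used directly, only the constancy of $F$. Everything else is coefficient matching.
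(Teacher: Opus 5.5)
Your proposal is correct and takes essentially the same route as the paper: you use $\widetilde F=F$ and Lemma~\ref{lem:miwa-exponential} to glue $\widetilde W_a(\Id_N-h\lambda C)W_a^{-1}$ across the annulus into a matrix polynomial of degree at most one, then read off $U_h$ from the expansion at $\lambda=\infty$ and from the value at $\lambda=0$. Your coefficient computations ($U_h=\widetilde KC-CK$ and $L_h(0)=\widetilde JJ^{-1}$) and the rearrangement leading to \eqref{eq:discrete-sato-wilson} agree with the paper's.
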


\begin{proof}
	Since \(\widetilde F=F\), the two expressions \(\widetilde\Psi_a\Psi_a^{-1}\) agree on the overlap. By Lemma~\ref{lem:miwa-exponential}, their common value is
	\[
	L_h(\lambda)=\widetilde W_a(\lambda)(\Id_N-h\lambda C)W_a(\lambda)^{-1}.
	\]
	The expression obtained from \(a=0\) is holomorphic on \(D_0\), while the expression obtained from \(a=\infty\) is holomorphic on \(D_\infty\) with at most one simple pole at \(\lambda=\infty\). Hence, \(L_h(\lambda)\) is a matrix polynomial of degree at most one. The expansion at infinity gives
	\[
	L_h(\lambda)=\Id_N+h(CK-\widetilde KC)-h\lambda C,
	\]
	whereas evaluation at \(\lambda=0\) gives \(L_h(0)=\widetilde J J^{-1}\). These identities yield \eqref{eq:discrete-baker-system} and \eqref{eq:discrete-gauge-potential}; equation \eqref{eq:discrete-sato-wilson} follows immediately.
\end{proof}

\begin{remark}
	\label{rem:discrete-asdym-equations}
	Introduce a second Miwa shift $\widehat X:=\mathbb T_kX$ and the associated difference operator
	$\Delta_kX:=(\widehat X-X)/k$. We then consider the pair of linear systems
	\[
	\widetilde\Psi_a(\lambda)=L_h(\lambda)\Psi_a(\lambda),
	\qquad
	\widehat\Psi_a(\lambda)=L_k(\lambda)\Psi_a(\lambda).
	\]
	The compatibility 
	$\widehat{\widetilde\Psi}_a=\widetilde{\widehat\Psi}_a$
	then yields
	$\widehat{L}_h(\lambda)L_k(\lambda)
	=\widetilde{L}_k(\lambda)L_h(\lambda)$.
	Substituting the corresponding matrix expressions for the discrete gauge potentials gives the lattice Yang equation (cf.~\eqref{eq:Yang-equation})
	\begin{equation}
		C(\Delta_hJ)J^{-1}
		-
		C(\Delta_kJ)J^{-1}
		+
		(\Delta_k\widetilde J)\widetilde J^{-1}C
		-
		(\Delta_h\widehat J)\widehat J^{-1}C
		=0,
		\label{eq:full-discrete-j-matrix}
	\end{equation}
	and the lattice LMP equation (cf.~\eqref{eq:LMP-equation})
	\begin{equation}
		\begin{aligned}
			\Delta_h\bigl(\widehat K C-CK\bigr)
			&-
			\Delta_k\bigl(\widetilde K C-CK\bigr) \\
			&+
			\bigl(\widehat{\widetilde K}C-C\widehat K\bigr)
			\bigl(\widehat K C-CK\bigr)
			-
			\bigl(\widehat{\widetilde K}C-C\widetilde K\bigr)
			\bigl(\widetilde K C-CK\bigr)
			=0.
		\end{aligned}
		\label{eq:full-discrete-k-matrix}
	\end{equation}
	Both may be regarded as discrete analogues of the ASDYM equations.
\end{remark}

Theorem~\ref{thm:miwa-dressing} collects the discrete analogues of \eqref{eq:baker-linear-system}, \eqref{eq:sato-wilson-equation}, and \eqref{eq:matrix-representations}.
Indeed, the expansion
\[
\mathbb T_h=\operatorname{id}+h\partial_1+\frac{h^2}{2}(\partial_2+\partial_1^2)+O(h^3)
\]
shows that the first coefficient recovers the \(t_1\)-flow, while the higher coefficients encode compatible combinations of the higher hierarchy flows. 

\begin{theorem}[Discrete coordinate flow]
	\label{thm:discrete-coordinate-flow}
	The relative coordinates satisfy
	\begin{equation}
		\Delta_hw_{ij}-\widetilde w_{i,j+1}C+Cw_{i+1,j}
		=\widetilde w_{i0}Cw_{0j},
		\qquad (i,j)\in\bbZ^2.
		\label{eq:discrete-coordinate-flow}
	\end{equation}
\end{theorem}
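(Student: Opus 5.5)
The plan is to mirror the proof of Theorem~\ref{thm:relative-coordinate-flow}: derive a discrete evolution equation for the generating function \(\mathcal W_{ab}(\eta,\zeta)\), and then read off \eqref{eq:discrete-coordinate-flow} by sectorwise coefficient comparison. The only input is Theorem~\ref{thm:miwa-dressing}, used in the dressing form
\[
\widetilde W_a(\lambda)(\Id_N-h\lambda C)=L_h(\lambda)W_a(\lambda),
\]
which follows from \eqref{eq:discrete-baker-system} and Lemma~\ref{lem:miwa-exponential}. The key structural fact is that \(L_h\) is affine in \(\lambda\) with leading coefficient \(-hC\). Hence
\[
L_h(\zeta)=L_h(\eta)-h(\zeta-\eta)C,
\]
and the unknown potential \(U_h\) will drop out, just as \(A_n\) did in the continuous case.

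\emph{Step 1.} Set \(M_{ab}=W_a(\eta)^{-1}W_b(\zeta)=N_{ab}+\Id_N\). Using the dressing form at \(\zeta\) (for \(b\)) and at \(\eta\) (for \(a\)), we have
\[
\widetilde W_a(\eta)^{-1}L_h(\eta)=(\Id_N-h\eta C)W_a(\eta)^{-1}.
\]
Combining this with the affine relation for \(L_h\) gives
\[
\widetilde M_{ab}(\Id_N-h\zeta C)=(\Id_N-h\eta C)M_{ab}-h(\zeta-\eta)\,\widetilde W_a(\eta)^{-1}CW_b(\zeta).
\]

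\emph{Step 2.} Substitute \(M=N+\Id_N\) and use \(N_{ab}=(\zeta-\eta)\mathcal W_{ab}\). The constant terms combine to \((\zeta-\eta)C\). After dividing by \(h(\zeta-\eta)\) we expect the identity
\[
\Delta_h\mathcal W_{ab}(\eta,\zeta)=\zeta\,\widetilde{\mathcal W}_{ab}(\eta,\zeta)C-\eta\,C\mathcal W_{ab}(\eta,\zeta)+C-\widetilde W_a(\eta)^{-1}CW_b(\zeta).
\]
This identity holds in each sector \(\mathscr M_{ab}\), with \(h\) small enough that the shifted decomposition is valid.

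\emph{Step 3.} Insert the expansions \eqref{eq:dressing-row-expansion}, \eqref{eq:dressing-column-expansion} (with tildes on the \(a\)-factor) and \eqref{eq:relative-coordinate-expansion}, and compare the coefficients of \(\eta^{-i-1}\zeta^{-j-1}\), including the factors \(\varepsilon_a\varepsilon_b\). The term \(\zeta\widetilde{\mathcal W}C\) contributes \(\widetilde w_{i,j+1}C\), the term \(\eta C\mathcal W\) contributes \(Cw_{i+1,j}\), and the bilinear part of \(\widetilde W_a^{-1}CW_b\) yields \(\widetilde w_{i0}Cw_{0j}\).

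The main obstacle is the bookkeeping at the boundary indices \(i=-1\) or \(j=-1\), and across the four sectors. Here multiplication by \(\eta\) or \(\zeta\) moves a coefficient out of \(\mathscr M_{ab}\) into the neighbouring sector. In addition, the single-index terms \(-\varepsilon_a\widetilde w_{i0}C\eta^{-i-1}\) and \(\varepsilon_b Cw_{0j}\zeta^{-j-1}\) coming from \(C-\widetilde W_a^{-1}CW_b\) must be matched. I would handle these exactly as in Theorem~\ref{thm:relative-coordinate-recurrence}, using the boundary identities \eqref{eq:boundary-condition} together with the extension property of Corollary~\ref{cor:relative-coordinates}. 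For instance, the overflow of \(\zeta\widetilde{\mathcal W}_{a\infty}C\) at \(j=-1\) is \([\zeta^{-1}]\widetilde{\mathcal W}_{a\infty}\,C=(\widetilde W_a^{-1}-\Id_N)C\), which is precisely what cancels the single-index term \(-\varepsilon_a\widetilde w_{i0}C\eta^{-i-1}\). With this, all the terms become uniform in \((i,j)\in\bbZ^2\), and \eqref{eq:discrete-coordinate-flow} follows.
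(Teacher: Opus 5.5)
Your proposal is correct and follows essentially the same route as the paper. The paper also rewrites the discrete Sato--Wilson equation as \((\Id_N-h\eta C)W_a(\eta)^{-1}=\widetilde W_a(\eta)^{-1}L_h(\eta)\), eliminates \(U_h\) via \(L_h(\eta)-L_h(\zeta)=h(\zeta-\eta)C\), arrives at exactly your identity \(\Delta_h\mathcal W_{ab}-\zeta\widetilde{\mathcal W}_{ab}C+\eta C\mathcal W_{ab}=C-\widetilde W_a(\eta)^{-1}CW_b(\zeta)\), and concludes by sectorwise coefficient comparison; your boundary bookkeeping is also right and more explicit than the paper's (in the \((a,\infty)\) sectors the \(\zeta^0\) overflow cancels the single-index term, while in the \((a,0)\) sectors, and symmetrically in \(\eta\), the single-index terms supply the boundary contributions \(\widetilde w_{i0}C\) and \(Cw_{0j}\) that the shifted sums miss).
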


\begin{proof}
	For \(\alpha\in\{\eta,\zeta\}\), set \(R(\alpha):=\Id_N-h\alpha C\). Equation \eqref{eq:discrete-sato-wilson} is equivalent to
	\[
	R(\eta)W_a(\eta)^{-1}=\widetilde W_a(\eta)^{-1}L_h(\eta).
	\]
	Using \(L_h(\eta)-L_h(\zeta)=h(\zeta-\eta)C\), we obtain
	\[
	R(\eta)\mathcal W_{ab}(\eta,\zeta)
	-\widetilde{\mathcal W}_{ab}(\eta,\zeta)R(\zeta)
	=h\bigl(\widetilde W_a(\eta)^{-1}CW_b(\zeta)-C\bigr).
	\]
	Rearranging gives
	\begin{equation*}
		\Delta_h\mathcal{W}_{ab}-\zeta\widetilde{\mathcal W}_{ab}C+C\mathcal W_{ab}\eta =
		C-\widetilde W_a(\eta)^{-1}CW_b(\zeta).
	\end{equation*}
	Thus, a substitution of \eqref{eq:dressing-row-expansion}, \eqref{eq:dressing-column-expansion}, and \eqref{eq:relative-coordinate-expansion}, followed by sectorwise coefficient comparison, finally proves \eqref{eq:discrete-coordinate-flow}.
\end{proof}

Likewise, the shift relation \eqref{eq:discrete-coordinate-flow} is a discrete analogue of \eqref{eq:relative-coordinate-flow}.
Setting \((i,j)=(-1,-1)\) in \eqref{eq:discrete-coordinate-flow} therefore gives
\begin{equation}
	\Delta_hQ=C-\widetilde P C P^{-1}=C-\widetilde J^{-1}CJ,
	\label{eq:discrete-Q-representation}
\end{equation}
which serves as the discrete analogue of \eqref{eq:alternative-matrix-representations}.

\subsection{Equivalent $\infty\times\infty$ block matrix representations}
\label{sec:infty-matrix-representation}

The results of Theorems~\ref{thm:relative-coordinate-recurrence},
\ref{thm:relative-coordinate-flow}, and~\ref{thm:discrete-coordinate-flow}
admit compatible representations in terms of $\infty\times\infty$
block matrix algebra.
To this end, let
\begin{equation}
\mathbf w:=
\begin{pmatrix}
	& \vdots & \vdots & \vdots &   \\
	\cdots & w_{-1,-1} & w_{-1,0} & w_{-1,1} & \cdots \\
	\cdots & w_{0,-1} & w_{00} & w_{01} & \cdots \\
	\cdots & w_{1,-1} & w_{10} & w_{11} & \cdots \\
	& \vdots & \vdots & \vdots &   \\
\end{pmatrix}
\label{eq:def-infinity-w}
\end{equation}
be the $\infty\times\infty$ block matrix whose entries are the relative
coordinates $\{w_{ij}\}_{(i,j)\in\mathbb Z^2}$.
Introduce
\[
(\boldsymbol\Lambda)_{ij}:=\delta_{i+1,j}\Id_N,
\qquad
\mathbf{O}:=\delta_{i0}\delta_{0j}\Id_N,
\qquad
\mathbf C:=\operatorname{diag}(\cdots,C,C,C,\cdots),
\]
where
\[
\delta_{ij}=
\begin{cases}
	&1,\qquad i=j, \\
	&0,\qquad i\neq j.
\end{cases}
\]
All products are interpreted as block-matrix products. Every matrix entry appearing below involves only a finite sum.

	Suppose that $(\mathbf A)_{ij}=a_{ij}$ and
	$(\mathbf B)_{ij}=b_{ij}$ are arbitrary $\infty\times\infty$
	matrices. Then
	\[
	(\boldsymbol{\Lambda}\mathbf A)_{ij}=a_{i+1,j},
	\qquad
	(\mathbf A\boldsymbol{\Lambda}^{-1})_{ij}=a_{i,j+1},
	\qquad
	(\mathbf A\mathbf O\mathbf B)_{ij}=a_{i0}b_{0j}.
	\]

\begin{theorem}[Equivalent \texorpdfstring{$\infty\times\infty$}{infinite-matrix} representations]
	\label{thm:infinite-matrix-representations}
	Let $\mathbf w$ be the $\infty\times\infty$ matrix associated with
	the relative coordinates $w_{ij}$. Then it satisfies the recurrence
	relation 
	\begin{equation*}
		\boldsymbol{\Lambda}\mathbf w-\mathbf w\boldsymbol{\Lambda}^{-1}=\mathbf w\mathbf O\mathbf w,
	\end{equation*}
	the continuous coordinate flow
	\begin{equation*}
		\partial_n\mathbf w-\boldsymbol{\Lambda}\partial_{n-1}\mathbf w=-\mathbf w\mathbf O\partial_{n-1}\mathbf w,
	\end{equation*}
	and the discrete coordinate flow
	\begin{equation*}
		\Delta_h\mathbf w-\widetilde{\mathbf w}\boldsymbol{\Lambda}^{-1}\mathbf C+\mathbf C\boldsymbol{\Lambda}\mathbf w=\widetilde{\mathbf w}\mathbf O\mathbf C\mathbf w.
	\end{equation*}

\end{theorem}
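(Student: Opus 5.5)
The plan is to verify each identity entrywise. By the stated rules for $\boldsymbol\Lambda$, $\boldsymbol\Lambda^{-1}$ and $\mathbf O$, the $(i,j)$ entry of each block-matrix identity should coincide with the corresponding scalar-index relation already proved. Before doing that, I would check that every product is well defined. Since $(\boldsymbol\Lambda)_{ij}=\delta_{i+1,j}\Id_N$, the matrix $\boldsymbol\Lambda^{-1}$ has entries $\delta_{i,j+1}\Id_N$. Hence $\boldsymbol\Lambda\mathbf A$, $\mathbf A\boldsymbol\Lambda^{-1}$ and $\mathbf A\mathbf O\mathbf B$ each involve a single nonzero term per entry. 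The diagonal matrix $\mathbf C$ acts entrywise: $(\mathbf C\mathbf A)_{ij}=Ca_{ij}$ and $(\mathbf A\mathbf C)_{ij}=a_{ij}C$. Likewise $\partial_n$, $\mathbb T_h$ and $\Delta_h$ act entrywise on $\mathbf w$.

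For the recurrence, the $(i,j)$ entry of $\boldsymbol\Lambda\mathbf w-\mathbf w\boldsymbol\Lambda^{-1}$ is $w_{i+1,j}-w_{i,j+1}$, and that of $\mathbf w\mathbf O\mathbf w$ is $w_{i0}w_{0j}$. This is exactly \eqref{eq:relative-coordinate-recurrence} of Theorem~\ref{thm:relative-coordinate-recurrence}, for all $(i,j)\in\bbZ^2$.

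For the continuous flow, the $(i,j)$ entry of the left side is $\partial_nw_{ij}-\partial_{n-1}w_{i+1,j}$, and that of the right side is $-w_{i0}\partial_{n-1}w_{0j}$. This is \eqref{eq:relative-coordinate-flow} of Theorem~\ref{thm:relative-coordinate-flow}.

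For the discrete flow, I compute the entries term by term:
\[
(\widetilde{\mathbf w}\boldsymbol\Lambda^{-1}\mathbf C)_{ij}=\widetilde w_{i,j+1}C,\qquad
(\mathbf C\boldsymbol\Lambda\mathbf w)_{ij}=Cw_{i+1,j},\qquad
(\widetilde{\mathbf w}\mathbf O\mathbf C\mathbf w)_{ij}=\widetilde w_{i0}Cw_{0j}.
\]
The last identity uses $\mathbf O\mathbf C=\mathbf C\mathbf O$, whose only nonzero block is $C$ at position $(0,0)$. These entries reproduce \eqref{eq:discrete-coordinate-flow} of Theorem~\ref{thm:discrete-coordinate-flow}.

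I do not expect a genuine obstacle; the only point needing care is ordering and placement. One must confirm that $\boldsymbol\Lambda^{-1}$, the transpose-shift, raises the column index by one under right multiplication. One must also confirm that $\mathbf C$ sits on the correct side: on the right for the $\widetilde{\mathbf w}$ term and on the left for the $\mathbf w$ term. This matches the noncommutative ordering in \eqref{eq:discrete-coordinate-flow}.

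Conversely, since the entries of two block matrices agree if and only if the matrices are equal, the three block identities are equivalent to the three indexwise families. This justifies the word ``equivalent'' in the statement.
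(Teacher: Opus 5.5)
Your proposal is correct and is essentially the paper's own proof, which simply notes that the three block identities hold entrywise by \eqref{eq:relative-coordinate-recurrence}, \eqref{eq:relative-coordinate-flow} and \eqref{eq:discrete-coordinate-flow}. Your explicit entry computations for $\boldsymbol\Lambda^{-1}$, $\mathbf C$ and $\mathbf O\mathbf C$ just spell out the bookkeeping the paper leaves implicit; note only that the discrete identity inherits the idempotency and reduction hypotheses of Theorem~\ref{thm:discrete-coordinate-flow}.
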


\begin{proof}
	The assertions follow entrywise from the preceding statements and equations \eqref{eq:relative-coordinate-recurrence}, \eqref{eq:relative-coordinate-flow}, and \eqref{eq:discrete-coordinate-flow}.
\end{proof}

Takasaki’s early work considered a similar compatible representation, but included only $\{w_{ij}\}_{i,j\geq0}$ \cite{Takasaki1984,Takasaki1985}.
In contrast, the construction \eqref{eq:def-infinity-w} contains $w_{ij}$ with $(i,j)\in\mathbb Z^2$. A result more directly related to Theorem~\ref{thm:infinite-matrix-representations} is the direct linearization scheme for the ASDYM hierarchy developed in \cite{LiZhang2025}.

\section{Integrable reductions and relative-coordinate representations}
\label{sec:Ward-conjecture}

In this section, we show how the Sato-theoretic construction developed above provides a unified framework for such reductions across different formulations of the ASDYM hierarchy.
Motivated by \cite{LiMarunoZhang2026}, we work with the gauge group \(\SL_2(\bbC)\) for convenience, which implies
\begin{equation}
	\det J=1,\qquad \tr K=0,\qquad \det P=1,\qquad \tr Q=0.
	\label{eq:SL2-normalization}
\end{equation}
Condition \eqref{eq:idempotent-C} then gives \(C=\Pi=\diag(1,0)\). Thus for any \(2\times2\) matrix \(X\in\{J,K,P,Q\}\),
\begin{equation}
	\partial_0X=[X,\Pi]
	=\begin{pmatrix}0&-X_{12}\\X_{21}&0\end{pmatrix}.
	\label{eq:Pi-reduction}
\end{equation}
For later use, set $x:=t_1$ and write
\begin{equation}
	K=\begin{pmatrix}K_{11}&K_{12}\\K_{21}&-K_{11}\end{pmatrix}
	=: \begin{pmatrix}u&-q\\r&-u\end{pmatrix},
	\qquad
	J=\begin{pmatrix}J_{11}&J_{12}\\J_{21}&J_{22}\end{pmatrix}
	=:\begin{pmatrix}s&J_{12}\\ps&J_{22}\end{pmatrix}.
	\label{eq:J-K-parametrization}
\end{equation}
In what follows, whenever the operator
\(\partial_x^{-1}:=\int \cdot \, \mathrm d x\) is used, the associated
constant of integration is taken to be zero.
Together with the two Miura-type relations \eqref{eq:matrix-representations} and \eqref{eq:alternative-matrix-representations}, and their discrete counterparts \eqref{eq:discrete-gauge-potential} and \eqref{eq:discrete-Q-representation}, these assumptions provide the basic $\SL_2(\bbC)$ ASDYM reduction framework used below.

We focus on the positive AKNS and GI hierarchies, from which several related integrable models are obtained by gauge transformations. All nonlinear variables appearing in these systems admit relative-coordinate representations.

\subsection{Reductions to continuous AKNS and GI hierarchies}
\label{sec:continuous-AKNS-GI}

The reductions to continuous AKNS and GI are known, we therefore only review them briefly. For a detailed proof, we refer the reader to Theorem~1 and~2 of \cite{LiMarunoZhang2026}.

Recalling the LMP equation \eqref{eq:LMP-equation} and setting $m=1$, we obtain
\begin{equation*}
	\partial_{n-1}\partial_1K-\partial_{n}[K,\Pi]-[\partial_{n-1}K,[K,\Pi]]=0,
	\qquad n\geq1.
\end{equation*}
Expanding this equation shows that the entries satisfy
\begin{equation*}
	u_{x,t_{n-1}}=-(qr)_{t_{n-1}}, \qquad
	q_{x,t_{n-1}}=-q_{t_n}-2u_{t_{n-1}} q, \qquad
	r_{x,t_{n-1}}=r_{t_n}-2u_{t_{n-1}}r.
\end{equation*}
Rearranging these relations yields 
\begin{equation}
	q_{t_{n}}=-q_{x,t_{n-1}}+2q\partial_x^{-1}(rq)_{t_{n-1}},
	\qquad
	r_{t_{n}}=r_{x,t_{n-1}}-2r\partial_x^{-1}(qr)_{t_{n-1}},
	\label{eq:AKNS-hierarchy}
\end{equation}
where $q_{t_0}=-q$ and $r_{t_0}=r$. This is precisely the positive AKNS hierarchy.
The first nontrivial member of \eqref{eq:AKNS-hierarchy} is the coupled nonlinear Schr\"odinger (NLS) system
\begin{equation}
	r_{t_2}=r_{xx}-2qr^2,\qquad q_{t_2}=-q_{xx}+2q^2r.
	\label{eq:NLS-equation}
\end{equation}

We next derive the GI hierarchy through the Miura transformation encoded by the $J$-matrix.
Equation \eqref{eq:matrix-representations} with \(n=1\) expands as
\[
\begin{pmatrix}
	s_x & J_{12,x} \\
	(ps)_x & J_{22,x}
\end{pmatrix}
=
\begin{pmatrix}
	-pqs & -q J_{22} \\
	-rs & -r J_{12}
\end{pmatrix},
\]
whose first column implies
\[
	s_x=-pqs,\qquad
	p_xs+ps_x=-rs.
\]
On a domain where $s\neq0$, eliminating $s$ from the above relations gives 
\begin{equation}
	p_x=-r+p^2q,
	\label{eq:GI-Miura}
\end{equation}
which is the Miura transformation between the AKNS and GI hierarchies \cite{KakeiKikuchi2004}.
Substituting \eqref{eq:GI-Miura} into \eqref{eq:AKNS-hierarchy} yields the positive GI hierarchy
\begin{subequations}
	\label{eq:GI-hierarchy}
	\begin{align}
		p_{t_{n}}&=p_{x,t_{n-1}}-2p\partial_x^{-1}\bigl(pq_x+(pq)^2\bigr)_{t_{n-1}},\qquad p_{t_0}=p,
		\label{eq:GI-p}\\
		q_{t_{n}}&=-q_{x,t_{n-1}}-2q\partial_x^{-1}\bigl(p_xq-(pq)^2\bigr)_{t_{n-1}},\qquad q_{t_0}=-q.
		\label{eq:GI-q}
	\end{align}
\end{subequations}
The first nontrivial flow is given by
\begin{equation}
	p_{t_2}
	=
	p_{xx}
	-2p^2q_x
	-2p^3q^2,
	\qquad
	q_{t_2}
	=
	-q_{xx}
	-2p_xq^2
	+2p^2q^3.
	\label{eq:gi-equation}
\end{equation}

\subsection{Reductions to lattice AKNS and GI systems}

We next derive the corresponding lattice equations from \eqref{eq:discrete-gauge-potential}. Substituting \eqref{eq:J-K-parametrization} gives
\begin{equation}
	\widetilde J=
	\begin{pmatrix}
		1-h(\widetilde u-u)&-hq\\
		-h\widetilde r&1
	\end{pmatrix}J.
	\label{eq:discrete-J-relation}
\end{equation}
Since $\det J=1$ is preserved under Miwa shifts, relation \eqref{eq:discrete-J-relation} then implies
\[
	\det
	\begin{pmatrix}
		1-h(\widetilde{u}-u) & -hq \\
		-h\widetilde r & 1
	\end{pmatrix}=1-h(\widetilde u-u)-h^2q\,\widetilde{r}=1.
\]
Rearranging this relation gives
\begin{equation}
	\Delta_hu=\frac{\widetilde{u}-u}{h}=-q\widetilde{r}.
	\label{eq:discrete-u-h}
\end{equation}
Similarly, recalling a second Miwa shift used in Remark~\ref{rem:discrete-asdym-equations}, one obtains $\Delta_ku=-q\widehat r$.
Substituting \eqref{eq:J-K-parametrization} into \eqref{eq:full-discrete-k-matrix} yields
\begin{subequations}\label{eq:discrete-k-entrywise}
	\begin{align}
		&
		\left(\frac1h-\frac1k\right)
		\bigl(
		\widehat{\widetilde u}
		-\widehat u-\widetilde u+u
		\bigr)
		+
		(\widehat{\widetilde u}-\widehat u)
		(\widehat u-u)
		-
		(\widehat{\widetilde u}-\widetilde u)
		(\widetilde u-u)
		+
		\widehat q\,\widehat r
		-
		\widetilde q\,\widetilde r
		=0,
		\\
		&
		\frac{\widetilde q-q}{h}
		-
		\frac{\widehat q-q}{k}
		+
		(\widetilde u-\widehat u)q
		=0,
		\qquad
		\frac{\widehat{\widetilde r}-\widehat r}{h}
		-
		\frac{\widehat{\widetilde r}-\widetilde r}{k}
		+
		\widehat{\widetilde r}
		(\widehat u-\widetilde u)
		=0.
	\end{align}
\end{subequations}
Using $\Delta_hu=-q\widetilde r$ and $\Delta_ku=-q\widehat{r}$, 
the last two equations in \eqref{eq:discrete-k-entrywise} reduce to the lattice AKNS system
\begin{equation}
		\frac{\widetilde q-q}{h}
		-
		\frac{\widehat q-q}{k}
		=
		q^2(h\widetilde r-k\widehat r), \qquad
		\frac{\widehat{\widetilde r}-\widehat r}{h}
		-
		\frac{\widehat{\widetilde r}-\widetilde r}{k}
		=
		\widehat{\widetilde r}\,^2
		(k\widetilde{q}-h\widehat{q}). 
		\label{eq:discrete-akns}
\end{equation}
The first equation in \eqref{eq:discrete-k-entrywise} is redundant, since it follows algebraically from \eqref{eq:discrete-akns} and the compatibility of \(\Delta_hu=-q\widetilde r\) and \(\Delta_ku=-q\widehat r\).

We next use the discrete Miura relation to obtain the lattice GI system. With \eqref{eq:J-K-parametrization} and \eqref{eq:discrete-u-h}, we rewrite \eqref{eq:discrete-J-relation} as
\begin{equation*}
	\begin{pmatrix}\widetilde s&\widetilde J_{12}\\
		\widetilde p\widetilde s&\widetilde J_{22}\end{pmatrix}
	=
	\begin{pmatrix}1+h^2q\widetilde r&-hq\\-h\widetilde r&1\end{pmatrix}
	\begin{pmatrix}s&J_{12}\\ps&J_{22}\end{pmatrix}.
\end{equation*}
The first column yields
\[
\widetilde s=(1-hpq+h^2q\widetilde r)s,
\qquad
\widetilde p\widetilde s=(p-h\widetilde r)s.
\]
Since \((1-hpq+h^2q\widetilde r)(1+hq\widetilde p)=1\), it follows that
\begin{equation}
	\frac{\widetilde s}{s}=\frac{1}{1+hq\widetilde p} \, ,
	\qquad
	h\widetilde r=p-\frac{\widetilde p}{1+hq\widetilde p}\, .
	\label{eq:discrete-GI-h}
\end{equation}
This is the discrete Miura relation between the lattice AKNS and GI systems. A rearrangement of  \eqref{eq:discrete-GI-h} implies
\[
\Delta_hp
=
-\widetilde r
+pq\widetilde p
-hq\widetilde p\,\widetilde r,
\]
which is a discrete analogue of \eqref{eq:GI-Miura}. Taking the continuous limit $h\to0$ recovers \eqref{eq:GI-Miura}.

Similarly, one has
\begin{equation}
	\frac{\widehat s}{s}=\frac{1}{1+kq\widehat p} \, ,
	\qquad
	k\widehat r=p-\frac{\widehat p}{1+kq\widehat p} \, .
	\label{eq:discrete-GI-k}
\end{equation}
Substituting \eqref{eq:discrete-GI-h} and \eqref{eq:discrete-GI-k} into the first equation of \eqref{eq:discrete-akns} gives
\begin{equation}
	\frac{\widetilde q-q}{h}
	-
	\frac{\widehat q-q}{k}
	=
	q\,^2\left(
	\frac{\widehat p}{1+kq\widehat p}
	-
	\frac{\widetilde p}{1+hq\widetilde p}
	\right).
	\label{eq:full-discrete-gi-q}
\end{equation}

On the other hand, shifting the second relation in
\eqref{eq:discrete-GI-h} in the $k$-direction and the
second relation in \eqref{eq:discrete-GI-k} in the
$h$-direction gives
\[
h\widehat{\widetilde r}
=
\widehat p
-
\frac{\widehat{\widetilde p}}
{1+h\widehat q\,\widehat{\widetilde p}} \, ,
\qquad
k\widehat{\widetilde r}
=
\widetilde p
-
\frac{\widehat{\widetilde p}}
{1+k\widetilde q\,\widehat{\widetilde p}} \, .
\]
These two relations respectively give
\begin{align*}
	\frac{\widehat{\widetilde p}-\widehat p}{h}
	=
	-\widehat{\widetilde r}
	+
	\frac{1}{h}
	\left(
	\widehat{\widetilde p}
	-
	\frac{\widehat{\widetilde p}}
	{1+h\widehat q\,\widehat{\widetilde p}}
	\right)=
	-\widehat{\widetilde r}
	+
	\widehat{\widetilde p}\,^2
	\frac{\widehat q}
	{1+h\widehat q\,\widehat{\widetilde p}} \, ,
\end{align*}
and
\begin{align*}
	\frac{\widehat{\widetilde p}-\widetilde p}{k}
	=
	-\widehat{\widetilde r}
	+
	\frac{1}{k}
	\left(
	\widehat{\widetilde p}
	-
	\frac{\widehat{\widetilde p}}
	{1+k\widetilde q\,\widehat{\widetilde p}}
	\right)=
	-\widehat{\widetilde r}
	+
	\widehat{\widetilde p}\,^2
	\frac{\widetilde q}
	{1+k\widetilde q\,\widehat{\widetilde p}} \, .
\end{align*}
Their difference eliminates $\widehat{\widetilde r}$ and yields 
\begin{equation}
	\frac{\widehat{\widetilde p}-\widehat p}{h}
	-
	\frac{\widehat{\widetilde p}-\widetilde p}{k}
	=
	\widehat{\widetilde p}\,^2
	\left(
	\frac{\widehat q}
	{1+h\widehat q\,\widehat{\widetilde p}}
	-
	\frac{\widetilde q}
	{1+k\widetilde q\,\widehat{\widetilde p}}
	\right).
	\label{eq:full-discrete-gi-p-symmetric}
\end{equation}
Equations
\eqref{eq:full-discrete-gi-q} and
\eqref{eq:full-discrete-gi-p-symmetric}
provide the lattice GI system.

\begin{remark}
	The lattice systems derived above are not merely discrete analogue of the corresponding continuous systems. They are also the functional representations of the corresponding continuous hierarchies  \cite{DimakisMullerHoissen2006,DimakisMullerHoissen2006-2}.
\end{remark}

The integrable systems obtained above admit explicit representations in terms of the relative coordinates. The corresponding representations for the continuous and lattice AKNS/GI systems are summarized in Table~\ref{tab:asdym-reductions-1}.
\begin{table}[htbp]
	\centering
	\small
	\renewcommand{\arraystretch}{2}
	\begin{tabular}{|c|c|c|}
		\hline
		\text{Integrable systems}
		&
		\text{variables in ASDYM matrices}
		&
		\text{associated relative coordinates}
		\\
		\hline
		\text{AKNS}~\eqref{eq:AKNS-hierarchy}/\eqref{eq:discrete-akns}
		&
		$(r,q)=(K_{21},-K_{12})$
		&
		$(-(w_{00})_{21},(w_{00})_{12})$
		\\
		\text{GI}~\eqref{eq:GI-hierarchy}/\eqref{eq:full-discrete-gi-q}\eqref{eq:full-discrete-gi-p-symmetric}
		&
		$\displaystyle(p,q)=\left(\frac{J_{21}}{J_{11}},-K_{12}\right)$
		&
		$\displaystyle\left(\frac{(I_2+w_{0,-1})_{21}}{(I_2+w_{0,-1})_{11}},(w_{00})_{12}\right)$
		\\
		\hline
	\end{tabular}
	\caption{Correspondence between integrable systems and relative coordinates I.}
	\label{tab:asdym-reductions-1}
\end{table}

\subsection{Gauge transformations and related integrable models}

In this subsection, we restrict our discussion to continuous integrable systems; the discrete case can be treated in a similar manner. The AKNS variables arise naturally from the local expansion of the dressing matrix at $\lambda=\infty$. Recalling \eqref{eq:linear-system-telescoped} with $a=\infty$, we obtain
\begin{equation}
	\partial_n\Psi_\infty=-(\lambda^n\Pi+B_n(\lambda))\Psi_\infty,
	\qquad
	B_n:=\sum_{i=1}^n\lambda^{n-i}\partial_{i-1}K.
	\label{eq:linear-system-akns}
\end{equation}
Defining $\Phi^{\text{[AKNS]}}:=e^{\xi/2}\Psi_\infty$ transforms \eqref{eq:linear-system-akns} into the linear system for the AKNS hierarchy
\begin{equation}
	\Phi^{\text{[AKNS]}}_{t_n}=-\left(\lambda^n\frac{\sigma_3}{2}+B_n(\lambda)\right)\Phi^{\text{[AKNS]}}.
\end{equation}
Taking $n=1$ yields the spectral problem for the AKNS hierarchy
\begin{equation}
	\Phi^{\text{[AKNS]}}_x=-
	\left(\lambda \frac{\sigma_3}{2}+[K,\Pi]\right)\Phi^{\text{[AKNS]}}=-\frac{1}{2}
	\begin{pmatrix}
		\lambda & 2q \\
		2r & -\lambda
	\end{pmatrix}\Phi^{\text{[AKNS]}}.
	\label{eq:spectral-akns}
\end{equation}

The NLS flow of the AKNS hierarchy admits a gauge-equivalent representation. Indeed, set (cf.~\cite{WadatiSogo1983})
\[
\Phi^{\text{[HF]}}:=J^{-1}\Phi^{\text{[AKNS]}},
\qquad
S:=J^{-1}\sigma_3 J.
\]
Then $S^2=\Id_2$ and $\tr S=0$.
It follows from \eqref{eq:spectral-akns} that
\[
\begin{aligned}
	\Phi^{\text{[HF]}}_x=(J^{-1}\Phi^{\text{[AKNS]}})_x&=
	-J^{-1}J_x\Phi^{\text{[HF]}}+J^{-1}\Phi^{\text{[AKNS]}}_x
	\\
	&=-J^{-1}J_x\Phi^{\text{[HF]}}-J^{-1}\left(\lambda\frac{\sigma_3}{2}-J_xJ^{-1}\right)J\Phi^{\text{[HF]}}
	=-\frac{\lambda}{2}S\Phi^{\text{[HF]}}.
\end{aligned}
\]
This is precisely the spectral problem for the Heisenberg ferromagnet (HF) model \cite{ZakharovTakhtadzhyan1979}. Likewise, the $t_2$-flow of $\Phi^{\text{[HF]}}$ is given by
\[
\Phi^{\text{[HF]}}_{t_2}=
\left(-\frac{\lambda^2}{2}S+\frac{\lambda}{2}SS_x\right)\Phi^{\text{[HF]}}.
\]
Thus, the compatibility condition $\partial_x\partial_{t_2}\Phi^{\text{[HF]}}=\partial_{t_2}\partial_x\Phi^{\text{[HF]}}$ yields the HF model
\begin{equation}
	S_{t_2}+\frac{1}{2 }
	\left[S,S_{xx}\right]=0.
	\label{eq:hf-model}
\end{equation}
On the other hand, recalling \eqref{eq:alternative-matrix-representations} with $n=1$, we have
\[
\partial_xQ=J^{-1}[J,\Pi]=\frac{1}{2}J^{-1}[J,\sigma_3]=\frac{1}{2}\sigma_3-\frac{1}{2}J^{-1}\sigma_3J=\frac{1}{2}\sigma_3-\frac{1}{2}S.
\]
This gives an alternative representation of $S$ in terms of the $Q$-matrix. The HF model can therefore be expressed in terms of the relative coordinates, as summarized in Table~\ref{tab:asdym-reductions-2}.
\begin{table}[htbp]
	\centering
	\small
	\renewcommand{\arraystretch}{1.5}
	\begin{tabular}{|c|c|c|}
		\hline
		\text{Integrable systems}
		&
		\text{variables in ASDYM matrices}
		&
		\text{associated relative coordinates}
		\\
		\hline
		\text{HF}~\eqref{eq:hf-model}
		&
		$S=J^{-1}\sigma_3 J$
		&
		$(I_2-w_{-1,0})\sigma_3(I_2+w_{0,-1})$
		\\
		\text{HF}~\eqref{eq:hf-model}
		&
		$S=\sigma_3-2Q_x$
		&
		$\sigma_3-2\partial_xw_{-1,-1}$
		\\
		\hline
	\end{tabular}
	\caption{Correspondence between integrable systems and relative coordinates II.}
	\label{tab:asdym-reductions-2}
\end{table}

The GI, Chen--Lee--Liu (CLL), and Kaup--Newell (KN) equations constitute the three standard gauge-equivalent forms of the DNLS equation \cite{WadatiSogo1983,KakeiKikuchi2004}. Indeed, introducing
\begin{equation*}
	\mu(\gamma)
	:=
	s^\gamma p
	=
	J_{21}J_{11}^{\gamma-1},
	\qquad
	\nu(\gamma)
	:=
	s^{-\gamma}q
	=
	-J_{11}^{-\gamma}K_{12},
	\qquad
	\gamma\in\mathbb Z,
\end{equation*}
transforms \eqref{eq:gi-equation} into the generalized derivative NLS (GDNLS) equation
\begin{equation}
	\begin{aligned}
		&\mu_{t_2}=\mu_{xx}+2\gamma \mu\nu\mu_x+2(\gamma-1)\mu^2\nu_x-(\gamma-1)(\gamma-2)\mu^3\nu^2,
		\\
		&\nu_{t_2}=-\nu_{xx}+2\gamma \mu\nu\nu_x+2(\gamma-1)\mu_x\nu^2+(\gamma-1)(\gamma-2)\mu^2\nu^3.
	\end{aligned}
	\label{eq:gdnls-equation}
\end{equation}
Setting $\gamma=0$ recovers the GI equation \eqref{eq:gi-equation}. Setting $\gamma=1$ yields the CLL equation \cite{ChenLeeLiu1979}
\begin{equation}
	\mu_{t_2}=\mu_{xx}+2\mu\nu\mu_x,\qquad
	\nu_{t_2}=-\nu_{xx}+2\mu\nu\nu_x.
	\label{eq:cll-equation}
\end{equation}
Setting $\gamma=2$ gives the KN equation \cite{KaupNewell1978}
\begin{equation}
	\mu_{t_2}=\mu_{xx}+2(\mu^2\nu)_x, \qquad
	\nu_{t_2}=-\nu_{xx}+2(\mu\nu^2)_x.
	\label{eq:kn-equation}
\end{equation}

On the other hand, equation \eqref{eq:alternative-matrix-representations} with $n=1$ implies
\[
\partial_xQ=J^{-1}[J,\Pi]=
\begin{pmatrix}
	-J_{12}J_{21} & -J_{22}J_{12} \\
	J_{11}J_{21} & J_{21}J_{12}
\end{pmatrix},
\]
where $\det J=1$ has been used.
It follows that
\[
\partial_xQ_{21}=J_{11}J_{21}=\mu(2),
\qquad
\partial_x(J_{11}^{-1}J_{12})=J_{11}^{-2}K_{12}=-\nu(2).
\]
In this sense, the pair $(f,g):=(Q_{21},-J_{11}^{-1}J_{12})$ satisfies the potential KN (pKN) equation
\begin{equation}
	f_{t_2}=f_{xx}+2f_x^2g_x,
	\qquad
	g_{t_2}=-g_{xx}+2f_xg_x^2.
	\label{eq:potential-kn}
\end{equation}
This formulation also arises in the construction of the Fokas--Lenells equation \cite{LiLiuZhang2025}. The corresponding relative-coordinate representations are summarized in Table~\ref{tab:asdym-reductions-3}.

\begin{table}[htbp]
	\centering
	\small
	\renewcommand{\arraystretch}{2}
	\begin{tabular}{|c|c|c|}
		\hline
		\text{Integrable systems}
		&
		\text{variables in ASDYM matrices}
		&
		\text{associated relative coordinates}
		\\
		\hline
		\text{GDNLS}~\eqref{eq:gdnls-equation}
		&
		$\displaystyle(\mu,\nu)=\left(J_{21}J_{11}^{\gamma-1},-\frac{K_{12}}{J_{11}^{\gamma} }\right)$
		&
		$\displaystyle\left((I_2+w_{0,-1})_{21}(I_2+w_{0,-1})_{11}^{\gamma-1},
		\frac{(w_{00})_{12}}{(I_2+w_{0,-1})_{11}^\gamma}\right)$
		\\
		\text{CLL}~\eqref{eq:cll-equation}
		&
		$\displaystyle(\mu,\nu)=\left(J_{21},-\frac{K_{12}}{J_{11}}\right)$
		&
		$\displaystyle\left((I_2+w_{0,-1})_{21},\frac{(w_{00})_{12}}{(I_2+w_{0,-1})_{11}}\right)$
		\\
		\text{KN}~\eqref{eq:kn-equation}
		&
		$\displaystyle(\mu,\nu)=\left(J_{21}J_{11},-\frac{K_{12}}{J_{11}^2}\right)$
		&
		$\displaystyle\left((I_2+w_{0,-1})_{21}(I_2+w_{0,-1})_{11}, 	\frac{(w_{00})_{12}}{(I_2+w_{0,-1})_{11}^2}\right)$
		\\
		\text{pKN}~\eqref{eq:potential-kn}
		&
		$\displaystyle (f,g)=\left(Q_{21},-\frac{J_{12}}{J_{11}}\right)$
		&
		$\displaystyle\left((w_{-1,-1})_{21},-\frac{(I_2+w_{0,-1})_{12}}{(I_2+w_{0,-1})_{11}}\right)$
		\\
		\hline
	\end{tabular}
	\caption{Correspondence between integrable systems and relative coordinates III.}
	\label{tab:asdym-reductions-3}
\end{table}

\begin{remark}
	Although the AKNS hierarchy is a classical integrable system, many of the systems discussed above are related to it through suitable gauge transformations \cite{WadatiSogo1983,KakeiKikuchi2004}. However, constructing these transformations and computing the explicit solutions could be rather difficult. This setting avoids this intermediate step by expressing the relevant reductions directly in terms of the ASDYM matrix variables and their associated relative coordinates.
\end{remark}

\subsection{Explicit solutions from the Cauchy matrix structure}

Recently, Takasaki observed that the master functions in Cauchy matrix approach can be identified with affine coordinates on the big cell of the Sato Grassmannian \cite{Takasaki2025}. Here we give an explicit construction of the relative coordinates via the Cauchy matrix approach. For a detailed proof, we refer the reader to several related references \cite{LiQuYiZhang2022,LiQuZhang2023,LiZhang2025}.

We begin by introducing the Cauchy matrix structure.
\begin{definition}[Cauchy matrix structure]
	\label{def:cauchy-matrix-structure}
	The Cauchy matrix structure associated with the ASDYM hierarchy
	is governed by the Sylvester equation
	\begin{equation}
		\Omega(\boldsymbol{t})\Gamma-\Xi\Omega(\boldsymbol{t})=\rho(\boldsymbol{t})\theta(\boldsymbol{t})
		\label{eq:cauchy-sylvester-equation}
	\end{equation}
	together with the compatible dispersion relations
	\begin{equation}
		\partial_n\rho(\boldsymbol{t})=\Xi^n\rho(\boldsymbol{t})\, \Pi,
		\qquad
		\partial_n\theta(\boldsymbol{t})=-\Pi\, \theta(\boldsymbol{t})\Gamma^n.
		\label{eq:dispersion}
	\end{equation}
	We further assume that the solution
	\(\Omega(\boldsymbol{t})\) is invertible on the domain under
	consideration.
	The associated integrable hierarchy is encoded by the master functions
	\begin{equation}
		S^{(i,j)}:=\theta(\boldsymbol{t})\Gamma^i\Omega(\boldsymbol{t})^{-1}\Xi^j\rho(\boldsymbol{t}).
		\label{eq:master-function}
	\end{equation}
	Here, $\Xi,\Gamma\in\GL_M(\mathbb C)$ are spectral matrices,
	while $\rho$ and $\theta^T$ are $M\times 2$ matrix-valued functions.
	Throughout this section, we assume that $\Xi$ and $\Gamma$ have no
	eigenvalues in common, so that the Sylvester equation uniquely
	determines $\Omega(\boldsymbol{t})$ \cite{Sylvester1884}.
\end{definition}

We next prove that the master functions yield the equations of the relative coordinates \eqref{eq:relative-coordinate-recurrence}, \eqref{eq:relative-coordinate-flow}, \eqref{eq:dimensional-reduction-coordinates}.

\begin{proposition}[Recurrence for master functions]
	\label{prop:master-function-recurrence}
	The master functions satisfy the recurrence relation
	\begin{equation}
		S^{(i+1,j)}
		=
		S^{(i,j+1)}+S^{(i,0)}S^{(0,j)},
		\qquad
		i,j\in\mathbb Z.
		\label{eq:cauchy-master-recurrence}
	\end{equation}
\end{proposition}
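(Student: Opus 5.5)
Since $S^{(i,j)}$ is a scalar expression in the matrices $\Gamma^i$, $\Omega^{-1}$ and $\Xi^j$, the plan is to obtain the recurrence by sandwiching the Sylvester equation \eqref{eq:cauchy-sylvester-equation} between $\Omega^{-1}$ factors. No dispersion relation is needed, so the $\boldsymbol t$-dependence plays no role. Because $\Xi,\Gamma\in\GL_M(\bbC)$, negative powers $\Gamma^i,\Xi^j$ are well defined, and the argument below is uniform in $i,j\in\bbZ$.

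First, multiply \eqref{eq:cauchy-sylvester-equation} by $\Omega^{-1}$ on both the left and the right. This gives the key identity
\begin{equation*}
\Gamma\Omega^{-1}-\Omega^{-1}\Xi=\Omega^{-1}\rho\,\theta\,\Omega^{-1}.
\end{equation*}
Next, multiply this identity on the left by $\theta\Gamma^i$ and on the right by $\Xi^j\rho$. The left-hand side becomes $\theta\Gamma^{i+1}\Omega^{-1}\Xi^j\rho-\theta\Gamma^i\Omega^{-1}\Xi^{j+1}\rho$, which is $S^{(i+1,j)}-S^{(i,j+1)}$ by \eqref{eq:master-function}. The right-hand side becomes $\bigl(\theta\Gamma^i\Omega^{-1}\rho\bigr)\bigl(\theta\Omega^{-1}\Xi^j\rho\bigr)$, and since $\Xi^0=\Gamma^0=\Id_M$ this equals $S^{(i,0)}S^{(0,j)}$. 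Rearranging gives \eqref{eq:cauchy-master-recurrence}.

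There is no real obstacle here; the only points to check are bookkeeping. The order of factors must be preserved, since the $S^{(i,j)}$ are $2\times2$ matrices and the product $S^{(i,0)}S^{(0,j)}$ is noncommutative, matching the order in \eqref{eq:relative-coordinate-recurrence}. The assumed invertibility of $\Omega$ must be invoked explicitly. Finally, the recurrence has exactly the form \eqref{eq:relative-coordinate-recurrence}, which motivates identifying $S^{(i,j)}$ with $w_{ij}$.
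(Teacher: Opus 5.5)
Your proof is correct and is exactly the paper's argument: you sandwich the Sylvester equation between $\Omega^{-1}$ factors to get $\Gamma\Omega^{-1}-\Omega^{-1}\Xi=\Omega^{-1}\rho\theta\Omega^{-1}$, then multiply by $\theta\Gamma^i$ on the left and $\Xi^j\rho$ on the right. One wording slip: $S^{(i,j)}$ is a $2\times2$ matrix-valued expression, not a scalar one, as you yourself note when you discuss the order of factors.
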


\begin{proof}
	Multiplying \eqref{eq:cauchy-sylvester-equation}
	from the left and right by $\Omega^{-1}$ yields
	\begin{equation*}
		\Gamma\Omega^{-1}-\Omega^{-1}\Xi
		=
		\Omega^{-1}\rho\theta\Omega^{-1}.
	\end{equation*}
	Multiplying this identity from the left by $\theta\Gamma^i$
	and from the right by $\Xi^j\rho$ gives
	\eqref{eq:cauchy-master-recurrence}.
\end{proof}

\begin{proposition}[Evolution of master functions]
	\label{prop:sato-wilson-cauchy-matrix}
	The master functions satisfy the evolution equation
	\begin{equation}
		\partial_nS^{(i,j)}=\partial_{n-1}S^{(i+1,j)}-S^{(i,0)}\partial_{n-1}S^{(0,j)}.
		\label{eq:evolution}
	\end{equation}
\end{proposition}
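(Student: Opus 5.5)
Differentiate the Sylvester equation \eqref{eq:cauchy-sylvester-equation} and use the dispersion relations \eqref{eq:dispersion} to compute $\partial_n\Omega$ explicitly. The claim to establish first is
\[
\partial_n\Omega=\sum_{k=0}^{n-1}\Xi^{k}\rho\,\Pi\,\theta\,\Gamma^{n-1-k}.
\]
Denote the right-hand side by $R_n$. A direct check shows that $R_n\Gamma-\Xi R_n=\rho\Pi\theta\Gamma^n-\Xi^n\rho\Pi\theta$, since the sum telescopes. On the other hand, differentiating \eqref{eq:cauchy-sylvester-equation} gives
\[
(\partial_n\Omega)\Gamma-\Xi(\partial_n\Omega)=(\partial_n\rho)\theta+\rho(\partial_n\theta)=\Xi^n\rho\Pi\theta-\rho\Pi\theta\Gamma^n.
\]
This is the negative of the previous expression, so the correct sign is $\partial_n\Omega=-R_n$. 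Because $\Xi$ and $\Gamma$ share no eigenvalues, the Sylvester operator is injective, and this identification is rigorous.

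The $n$-dependence enters through a simple recursion. We have
\[
R_n=\Xi R_{n-1}+\rho\Pi\theta\Gamma^{n-1},\qquad\text{equivalently}\qquad R_n=R_{n-1}\Gamma+\Xi^{n-1}\rho\Pi\theta.
\]
Together with $\partial_n\rho=\Xi\,\partial_{n-1}\rho$ and $\partial_n\theta=(\partial_{n-1}\theta)\Gamma$, this reduces everything to the level $n-1$.

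Next, differentiate the master function \eqref{eq:master-function}:
\[
\partial_nS^{(i,j)}=(\partial_n\theta)\Gamma^i\Omega^{-1}\Xi^j\rho+\theta\Gamma^i\Omega^{-1}R_n\Omega^{-1}\Xi^j\rho+\theta\Gamma^i\Omega^{-1}\Xi^j\partial_n\rho.
\]
Use the second form of the recursion, $R_n=R_{n-1}\Gamma+\Xi^{n-1}\rho\Pi\theta$, and write $\partial_n\theta=(\partial_{n-1}\theta)\Gamma$ and $\partial_n\rho=\Xi\,\partial_{n-1}\rho$. The aim is to compare this with $\partial_{n-1}S^{(i+1,j)}$, which carries an extra $\Gamma$ on the left after $\theta$.

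The main obstacle is moving that $\Gamma$ through $\Omega^{-1}$. For this, use the identity from the proof of Proposition~\ref{prop:master-function-recurrence},
\[
\Gamma\Omega^{-1}=\Omega^{-1}\Xi+\Omega^{-1}\rho\theta\Omega^{-1}.
\]
It converts a $\Gamma$ to the left of $\Omega^{-1}$ into a $\Xi$ to the right, at the cost of a rank-type term $\Omega^{-1}\rho\theta\Omega^{-1}$. Apply this identity once to each of the three terms of $\partial_{n-1}S^{(i+1,j)}$. The $\theta$-derivative term and the $\rho$-derivative term then match the corresponding terms of $\partial_nS^{(i,j)}$ up to correction terms. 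Each correction factors as $S^{(i,0)}$ times a piece of $\partial_{n-1}S^{(0,j)}$, because $\theta\Gamma^i\Omega^{-1}\rho=S^{(i,0)}$.

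The leftover boundary pieces must then cancel. These come from $\Xi^{n-1}\rho\Pi\theta$ in the $R_n$ recursion and from the $\Gamma$-versus-$\Xi$ mismatch in the derivative of $\rho$. Collecting everything should give $\partial_{n-1}S^{(i+1,j)}-\partial_nS^{(i,j)}=S^{(i,0)}\partial_{n-1}S^{(0,j)}$. The bookkeeping of signs, in particular the sign of $\partial_n\Omega$, is where errors are most likely, so I would fix conventions by first checking the case $n=1$, $i=j=0$ explicitly.

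As a fallback, the proposition can be derived at the level of generating functions, analogous to the proof of Theorem~\ref{thm:relative-coordinate-flow}. Set $\mathcal S(\eta,\zeta)=\theta(\eta-\Gamma)^{-1}\Omega^{-1}(\zeta-\Xi)^{-1}\rho$. The task is then to show that $\partial_n\mathcal S-\eta\,\partial_{n-1}\mathcal S$ factorizes as a product of a left function of $\eta$ and a right function of $\zeta$, after which coefficient comparison yields \eqref{eq:evolution}.
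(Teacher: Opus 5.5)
Your argument is correct and is essentially the paper's: your recursion $R_n=R_{n-1}\Gamma+\Xi^{n-1}\rho\Pi\theta$, together with $\partial_n\Omega=-R_n$, is exactly the paper's identity $\partial_n\Omega=(\partial_{n-1}\Omega)\Gamma-(\partial_{n-1}\rho)\theta$, which you then substitute into $S^{(i,j)}$ and simplify with $\Gamma\Omega^{-1}=\Omega^{-1}\Xi+\Omega^{-1}\rho\theta\Omega^{-1}$. One bookkeeping correction: the $\theta$-derivative terms of $\partial_nS^{(i,j)}$ and $\partial_{n-1}S^{(i+1,j)}$ coincide exactly, and the piece $-S^{(i,0)}(\partial_{n-1}\theta)\Omega^{-1}\Xi^j\rho$ comes instead from the $\Omega$-terms via $R_{n-1}\Gamma-\Xi R_{n-1}=\rho\Pi\theta\Gamma^{n-1}-\Xi^{n-1}\rho\Pi\theta$, whose second part cancels the boundary term $\Xi^{n-1}\rho\Pi\theta$ coming from $R_n$.
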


\begin{proof}
	Differentiating \eqref{eq:cauchy-sylvester-equation} with respect to $t_n$
	and using \eqref{eq:dispersion} gives
	\[
	\partial_n\Omega=(\partial_{n-1}\Omega)\Gamma-(\partial_{n-1}\rho)\theta.
	\]
	Substituting this relation into \eqref{eq:master-function} and evaluating
	$
	\partial_nS^{(i,j)}-\partial_{n-1}S^{(i+1,j)}
	$
	yields $-S^{(i,0)}\partial_{n-1}S^{(0,j)}$, which proves the assertion.
\end{proof}

\begin{proposition}[$t_0$-flow for master functions]
	\label{prop:cauchy-dimensional}
	The master functions yield the $t_0$-flow dimensional reduction constraint 
	\begin{equation}
		\partial_0S^{(i,j)}=[S^{(i,j)},\Pi].
		\label{eq:master-t0-flow}
	\end{equation}
\end{proposition}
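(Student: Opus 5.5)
We need $\partial_0 S^{(i,j)} = S\Pi - \Pi S$. With $n=0$ the dispersion relations read $\partial_0\rho=\rho\Pi$ and $\partial_0\theta=-\Pi\theta$, since $\Xi^0=\Gamma^0=I$. The plan is to differentiate the Sylvester equation \eqref{eq:cauchy-sylvester-equation} in $t_0$, show that $\Omega$ is independent of $t_0$, and then differentiate the master functions directly.

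First, differentiate \eqref{eq:cauchy-sylvester-equation} with respect to $t_0$ and insert the dispersion relations:
\[
(\partial_0\Omega)\Gamma-\Xi(\partial_0\Omega)=(\partial_0\rho)\theta+\rho\,\partial_0\theta=\rho\Pi\theta-\rho\Pi\theta=0 .
\]
The idempotency $\Pi^2=\Pi$ is not even needed here; the two terms cancel identically. Because $\Xi$ and $\Gamma$ have no common eigenvalue, the homogeneous Sylvester equation $X\Gamma-\Xi X=0$ has only the trivial solution. Hence $\partial_0\Omega=0$, and consequently $\partial_0(\Omega^{-1})=0$.

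Next, differentiate \eqref{eq:master-function}. Only $\theta$ and $\rho$ vary with $t_0$, so
\[
\partial_0S^{(i,j)}=(\partial_0\theta)\Gamma^i\Omega^{-1}\Xi^j\rho+\theta\Gamma^i\Omega^{-1}\Xi^j(\partial_0\rho)
=-\Pi S^{(i,j)}+S^{(i,j)}\Pi=[S^{(i,j)},\Pi].
\]
This holds for all $i,j\in\mathbb Z$, since $\Xi$ and $\Gamma$ are invertible, and it is exactly \eqref{eq:master-t0-flow}.

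The only point that needs care is the $t_0$-independence of $\Omega$. This comes from the uniqueness of solutions to the Sylvester equation: the spectra of $\Xi$ and $\Gamma$ are disjoint, which is the standing assumption in Definition~\ref{def:cauchy-matrix-structure}. One could instead note that the right-hand side $\rho\theta$ of \eqref{eq:cauchy-sylvester-equation} has $\partial_0(\rho\theta)=0$, so the unique solution $\Omega$ is independent of $t_0$.

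Everything else is direct substitution. The result matches \eqref{eq:dimensional-reduction-coordinates} with $C=\Pi$ under the identification $w_{ij}\leftrightarrow S^{(i,j)}$.
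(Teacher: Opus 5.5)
Your proof is correct and follows the paper's argument: differentiate the Sylvester equation in $t_0$ to obtain the homogeneous equation $(\partial_0\Omega)\Gamma-\Xi(\partial_0\Omega)=0$, conclude $\partial_0\Omega=0$, and then differentiate $S^{(i,j)}$ using $\partial_0\theta=-\Pi\theta$ and $\partial_0\rho=\rho\Pi$. You also spell out two steps the paper leaves implicit: the cancellation $\rho\Pi\theta-\rho\Pi\theta=0$, and the use of the disjoint-spectra assumption on $\Xi$ and $\Gamma$.
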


\begin{proof}
	Taking the $t_0$-derivative of
	\eqref{eq:cauchy-sylvester-equation} gives
	$(\partial_0\Omega)\Gamma-\Xi(\partial_0\Omega)=0$, which implies
	$\partial_0\Omega=0$. Direct substitution then yields
	\[
	\partial_0S^{(i,j)}=-\Pi\theta\Gamma^i\Omega^{-1}\Xi^j\rho+\theta\Gamma^i\Omega^{-1}\Xi^j\rho \Pi=[S^{(i,j)},\Pi].
	\]
	This proves \eqref{eq:master-t0-flow}.
\end{proof}

Proposition~\ref{prop:master-function-recurrence}--\ref{prop:cauchy-dimensional} imply that $w_{ij}=S^{(i,j)}$ constitutes a particular solution of the equations governing the relative coordinates.
Define
\begin{equation}
	J:=I_2+S^{(0,-1)},\qquad K:=-S^{(0,0)}, \qquad P:=I_2-S^{(-1,0)}, \qquad Q:=S^{(-1,-1)},
	\label{eq:asdym-variables-cauchy-matrix}
\end{equation}
Then Theorem~\ref{thm:relative-coordinate-flow} and Corollary~\ref{cor:alternative-matrix-representations} imply that
\begin{equation}
	\partial_{n-1}K=-(\partial_nJ)J^{-1},
	\qquad
	\partial_{n}Q=-(\partial_{n-1}P)P^{-1}=J^{-1}\partial_{n-1}J.
	\label{eq:miura-relation-cauchy-matrix}
\end{equation}
We further show that they can be chosen to satisfy the
$\mathrm{SL}_2(\mathbb C)$ gauge conditions \eqref{eq:SL2-normalization}.

\begin{proposition}[Determinant and trace formulae]
	The ASDYM matrix variables satisfy
	\begin{equation}
	\det J = \frac{\det \Gamma}{\det \Xi}, \qquad \operatorname{tr} K = \operatorname{tr} \Xi - \operatorname{tr} \Gamma
		\label{eq:cauchy-JK-central}
	\end{equation}
	and
	\begin{equation}
		\det P
		=
		\frac{\det \Xi}{\det \Gamma},
		\qquad
		\operatorname{tr} Q
		=
		\operatorname{tr} \Xi^{-1}
		-
		\operatorname{tr} \Gamma^{-1}.
		\label{eq:cauchy-PQ-central}
	\end{equation}
\end{proposition}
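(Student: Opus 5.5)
The plan is to express $J$, $K$, $P$, $Q$ through $\Omega^{-1}$ and then use the Sylvester equation \eqref{eq:cauchy-sylvester-equation} to identify each with something whose determinant or trace is computable. The two tools are the Weinstein--Aronszajn identity $\det(\Id_2+XY)=\det(\Id_M+YX)$ and the cyclicity of the trace.

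For $J$, write
\[
J=\Id_2+\theta\Omega^{-1}\Xi^{-1}\rho,
\qquad\text{so}\qquad
\det J=\det\bigl(\Id_M+\Omega^{-1}\Xi^{-1}\rho\theta\bigr).
\]
Substituting $\rho\theta=\Omega\Gamma-\Xi\Omega$ gives
\[
\Id_M+\Omega^{-1}\Xi^{-1}(\Omega\Gamma-\Xi\Omega)=\Omega^{-1}\Xi^{-1}\Omega\Gamma.
\]
Its determinant is $\det\Gamma/\det\Xi$.

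For $P=\Id_2-\theta\Gamma^{-1}\Omega^{-1}\rho$, the same identity gives $\det P=\det(\Id_M-\Omega^{-1}\rho\theta\Gamma^{-1})$. Inserting the Sylvester relation gives
\[
\Id_M-\bigl(\Gamma-\Omega^{-1}\Xi\Omega\bigr)\Gamma^{-1}=\Omega^{-1}\Xi\Omega\Gamma^{-1},
\]
so $\det P=\det\Xi/\det\Gamma$. This is consistent with $P=J^{-1}$ from Corollary~\ref{cor:alternative-matrix-representations}; that corollary was derived for the dressing setting, but here the result is checked directly.

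For the traces, $\tr K=-\tr(\theta\Omega^{-1}\rho)=-\tr(\Omega^{-1}\rho\theta)$. Substituting again gives
\[
-\tr\bigl(\Gamma-\Omega^{-1}\Xi\Omega\bigr)=\tr\Xi-\tr\Gamma.
\]
Similarly, $\tr Q=\tr(\theta\Gamma^{-1}\Omega^{-1}\Xi^{-1}\rho)=\tr(\Omega^{-1}\Xi^{-1}\rho\theta\Gamma^{-1})$. Inserting $\rho\theta=\Omega\Gamma-\Xi\Omega$ gives
\[
\tr\bigl(\Omega^{-1}\Xi^{-1}\Omega-\Omega^{-1}\Omega\Gamma^{-1}\bigr)=\tr\Xi^{-1}-\tr\Gamma^{-1}.
\]

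I do not expect a real obstacle; the argument is a finite computation. The only care needed is bookkeeping: placing $\Xi^{-1}$ and $\Gamma^{-1}$ on the correct side so that the Sylvester substitution telescopes, and using the invertibility of $\Omega$, $\Xi$ and $\Gamma$, all of which are assumed in Definition~\ref{def:cauchy-matrix-structure}.

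Finally, I would remark that these quantities are $\bt$-independent. Hence choosing, for example, $\Xi$ and $\Gamma$ with equal determinants and equal traces (and equal traces of inverses) imposes the $\SL_2$ conditions \eqref{eq:SL2-normalization}.
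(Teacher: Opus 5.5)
Your proof is correct and follows the paper's argument: the identity $\det(\Id_2+XY)=\det(\Id_M+YX)$ for $\det J$ and cyclicity of the trace for $\tr K$, each followed by the substitution $\rho\theta=\Omega\Gamma-\Xi\Omega$. Your explicit computations for $P$ and $Q$ fill in what the paper leaves as ``obtained similarly.'' One caution about your closing aside, which is not part of the statement: for $M\le 3$ the conditions $\det\Xi=\det\Gamma$, $\tr\Xi=\tr\Gamma$, $\tr\Xi^{-1}=\tr\Gamma^{-1}$ already force $\Xi$ and $\Gamma$ to have the same characteristic polynomial, which contradicts the no-common-eigenvalue assumption, so that route to \eqref{eq:SL2-normalization} only works for $M\ge 4$ (the paper instead rescales $J$ and shifts $K$ and $Q$ by multiples of $\Id_2$).
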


\begin{proof}
	By \eqref{eq:master-function} and \eqref{eq:asdym-variables-cauchy-matrix}, we have
	$
	J
	=
	I_2+\theta\Omega^{-1}\Xi^{-1}\rho.
	$
	Using the determinant identity
	\[
	\det(I_2+AB)=\det(I_M+BA)
	\]
	for \(A\in\Mat_{2\times M}\) and
	\(B\in\Mat_{M\times 2}\), together with the Sylvester
	equation \eqref{eq:cauchy-sylvester-equation}, we obtain
	\begin{align*}
		\det J
		=
		\det\left(
		I_M+\Omega^{-1}\Xi^{-1}\rho\theta
		\right)
		=
		\det\left(
		\Omega^{-1}\Xi^{-1}\Omega\Gamma
		\right)
		=
		\frac{\det\Gamma}{\det\Xi}.
	\end{align*}
	For \(K=-S^{(0,0)}\), cyclicity of the trace gives
	\begin{align*}
		\operatorname{tr}K
		=
		-\operatorname{tr}
		\left(
		\Omega^{-1}\rho\theta
		\right)
		=
		-\operatorname{tr}\Gamma
		+
		\operatorname{tr}
		\left(
		\Omega^{-1}\Xi\Omega
		\right)
		=
		\operatorname{tr}\Xi
		-
		\operatorname{tr}\Gamma.
	\end{align*}
	These prove \eqref{eq:cauchy-JK-central}. The relations
	\eqref{eq:cauchy-PQ-central} are obtained similarly.
\end{proof}

\begin{remark}
	To satisfy the requirements \eqref{eq:SL2-normalization}, we further set
	\[
	J'=\frac{1}{\sqrt{\operatorname{det}J}}\, J,
	\qquad
	K'=K-\frac{\operatorname{tr}K}{2}I_2, \qquad
	P'=(J')^{-1},
	\qquad
	Q'=Q-\frac{\operatorname{tr}Q}{2}I_2.
	\]
	This normalization does not change the Miura-type relations
	\eqref{eq:miura-relation-cauchy-matrix} and yields
	\eqref{eq:SL2-normalization} exactly.
\end{remark}

The abstract structure in Definition~\ref{def:cauchy-matrix-structure} admits a simple realization in terms of a
dressed Cauchy matrix. Indeed, let $\Xi$ and $\Gamma$ be diagonal matrices,
\[
\Xi=\operatorname{diag}(\kappa_1,\dots,\kappa_M),
\qquad
\Gamma=\operatorname{diag}(\ell_1,\dots,\ell_M),
\]
and let $\rho$ and $\theta$ be of the form
\begin{equation*}
	\rho_{i1}=a_i\, \e^{\xi(\boldsymbol{t},\kappa_i)},\qquad \rho_{i2}=b_i, \qquad
	\theta_{1j}=c_j \, \e^{-\xi(\boldsymbol{t},\ell_j)}, \qquad \theta_{2j}=d_j,
\end{equation*}
where $a_i,b_i,c_j,d_j\in\mathbb C$ for $i,j=1,\ldots, M$.
Then the dispersion relations \eqref{eq:dispersion} are satisfied, while
the Sylvester equation \eqref{eq:cauchy-sylvester-equation} can be solved
entrywise as
\begin{equation*}
	(\Omega)_{ij}=-\frac{a_ic_j \, \e\,^{\xi(\boldsymbol{t},\kappa_i)-\xi(\boldsymbol{t},\ell_j)}+b_id_j}{\kappa_i-\ell_j}.
\end{equation*}
Thus, $\Omega$ is simply the Cauchy kernel
$(\kappa_i-\ell_j)^{-1}$ dressed by the functions entering
$\rho$ and $\theta$. These functions therefore determine the explicit
form of $S^{(i,j)}$ through \eqref{eq:master-function}.

\section{Concluding remarks}
\label{sec:concluding-remarks}

In this paper, we have developed a unified Sato--theoretic formulation of the ASDYM hierarchy based on a normalized Riemann--Hilbert decomposition. The local dressing matrices at $\lambda=0$ and $\lambda=\infty$ are encoded by a bi-infinite array of relative coordinates, within which the ASDYM matrix formulations and their associated integrable reductions can be treated uniformly.

A key ingredient is the generating function \eqref{eq:relative-kernel-identity}, whose four-sector expansions determine the bi-infinite array $\{w_{ij}\}_{(i,j)\in\mathbb Z^2}$. In particular, expanding \eqref{eq:relative-kernel-identity} at $(\eta,\zeta)=(\infty,\infty)$ yields
\[
\frac{W_\infty(\eta)^{-1}W_\infty(\zeta)-I_N}{\zeta-\eta}
=
\sum_{i\geq0}\sum_{j\geq0}
w_{ij}\eta^{-i-1}\zeta^{-j-1},
\]
which recovers the coefficients $w_{ij}$ for $i,j\geq0$. Thus, the full family $\{w_{ij}\}_{(i,j)\in\mathbb Z^2}$ may be viewed as a natural extension of the affine coordinates on the big cell of the Sato Grassmannian. The inclusion of negative indices substantially enlarges this coordinate framework: while $w_{00}$ naturally leads to AKNS-type equations, the full bi-infinite family provides a uniform description of a broader class of integrable systems.

Within this framework, the integrable systems considered here admit direct realizations in terms of the relative coordinates; see Tables~\ref{tab:asdym-reductions-1}--\ref{tab:asdym-reductions-3} in Section~\ref{sec:Ward-conjecture}. We have also derived several lattice equations through Miwa shifts, providing an alternative route to functional representations of the same hierarchy. Finally, the Cauchy matrix construction gives explicit Grammian-type realizations of the relative coordinates.

For computational convenience, we have imposed several simplifying assumptions and considered a number of representative cases. These choices suggest several directions for further development.

\begin{enumerate}
	\item Throughout Section~\ref{sec:Ward-conjecture}, we restrict our attention to the $\SL_2(\mathbb C)$ gauge group. A natural extension is to replace it by $\GL_N(\mathbb C)$, which would allow reductions to multi-component integrable systems to be considered within the same framework.

	\item In the present paper, we have considered the ASDYM hierarchy with a single sequence of time variables \cite{DimakisMullerHoissen2008}. In fact, a more general formulation should incorporate multiple sequences of hierarchy times, thereby allowing reductions associated with integrable systems possessing toroidal Lie algebra structures \cite{Takasaki1990,KakeiIkedaTakasaki2002}.
	
	\item In Section~\ref{sec:Ward-conjecture}, only the lattice AKNS and GI systems have been studied explicitly. However, Tables~\ref{tab:asdym-reductions-1}--\ref{tab:asdym-reductions-3} already suggest corresponding realizations of the lattice HF and DNLS systems. It would therefore be natural to derive these discrete systems systematically and clarify their interrelations within the same ASDYM framework.
	
	\item For explicit solutions, we have focused on the Cauchy matrix realization \cite{LiQuYiZhang2022,LiQuZhang2023,LiZhang2025}, which yields Grammian-type solutions. Other constructions, such as Wronskian representations \cite{HamanakaHuang2022,Huang2021}, should likewise admit formulations in terms of the relative coordinates, further connecting the present Sato-theoretic framework with standard solution techniques.
	
\end{enumerate}

Taken together, these results provide a Sato--theoretic basis for further study of the ASDYM hierarchy and its continuous and discrete reductions, and offer a concrete point of departure for investigating the Ward conjecture.

\section*{Acknowledgments}
The authors are grateful for valuable discussions with Masashi Hamanaka, Saburo Kakei and
Kanehisa Takasaki during the Mini-Workshop on ASD Yang--Mills Equations at
Nagoya University.  The work of SSL is supported by the JSPS Overseas
Research Fellowships (P25326). The work of DJZ is supported by the NSFC grant (12271334, 1241540016)

\end{document}